\pgfplotsset{compat=newest} 
\pgfplotsset{compat=newest}
\pgfplotsset{plot coordinates/math parser=false}
\colorlet{lcolor}{blue!40!black}
\colorlet{ucolor}{magenta!40!black}
\colorlet{ccolor}{green!40!black}
\theoremstyle{definition}
 \newtheorem{asum}{Assumption}
 \newtheorem{theo}{Theorem}
 \newtheorem{rema}{Remark}
 \newtheorem{lemma}{Lemma}
 \newtheorem{prop}{Proposition}
 \newtheorem{coro}{Corollary}
 \newcommand{\CLF}{\gamma}
  \newcommand{\boundd}{{b}}
  \newcommand{\change}[2]{#2}
 \newcommand{\MH}[1]{}
 \newcommand{\partder}[1]{V_\CLF^{'}(#1) }
 \newcommand{\timeder}[2]{\mathcal{L}_fV_\CLF(#1,#2)}
\newcommand{\refalgoo}{\ref{algo_cont_dyn_trig}} 
\newcommand{\tone}{r}
\newcommand{\arxiv}[2]{#2}
\title{\LARGE \bf
Nonlinear Dynamic Periodic Event-Triggered Control with Robustness to Packet Loss Based on Non-Monotonic Lyapunov Functions
}
\author{Michael Hertneck, Steffen Linsenmayer, Frank Allg\"ower
	\thanks{The authors thank the German Research Foundation (DFG) for support of
		this work within the German Excellence
		Strategy under grant EXC-2075.}
	\thanks{The authors are with the Institute for Systems Theory and Automatic Control, University of Stuttgart, 70569 Stuttgart, Germany (email: $\{$hertneck,linsenmayer,allgower$\}$@ist.uni-stuttgart.de).}%
} 
\begin{document}


\maketitle

\arxiv{
	\thispagestyle{empty}
	\pagestyle{empty}
	}{}

\begin{abstract}
This paper considers the stabilization of nonlinear continuous-time dynamical systems employing periodic event-triggered control (PETC). Assuming knowledge of a stabilizing feedback law for the continuous-time system with a certain convergence rate, a dynamic, state dependent PETC mechanism is designed. The proposed mechanism guarantees on average the same worst case convergence behavior except for tunable deviations. Furthermore, a new approach to determine the sampling period for the proposed PETC mechanism is presented. This approach as well as the actual trigger rule exploit the theory of non-monotonic Lyapunov functions.  An additional feature of the proposed PETC mechanism is the possibility to integrate knowledge about packet losses in the PETC design. The proposed PETC mechanism is illustrated with a nonlinear numerical example from literature. \arxiv{}{This paper is the accepted version of \cite{hertneck2019nonlinear}, containing also the proofs of the main results.}



\end{abstract} 

\section{Introduction}
\label{sec_intro}
Networked control systems (NCS) are control systems in which some or all links in the feedback loop are replaced by a shared communication network. Whilst NCS are useful in many modern control applications, several network induced problems have to be addressed (for a detailed overview see e.g. \cite{Hespanha2007}). One major challenge in the field of NCS is the design of sampling and control strategies that use the network as little as possible to keep it available for other applications while being robust to unavoidable network induced imperfections as e.g packet loss. Nevertheless, stability and performance goals like a certain convergence rate of the system state need to be guaranteed.
A huge step towards tackling these two conflicting objectives was made by the development of event-triggered control (ETC) paradigms \cite{Heemels2012}.

In ETC, control updates are sent over the network according to a system state dependent trigger rule. While classical results on ETC employ a static trigger rule \cite{tabuada2007event,ong2018event}, in \cite{girard2015dynamic} the concept of dynamic ETC, where the trigger rule changes dynamically over time, has been introduced. 
 However, ETC approaches like those in \cite{tabuada2007event}-\cite{girard2015dynamic} require the continuous evaluation of the trigger rule, which makes their implementation on digital platforms impossible. In periodic event-triggered control (PETC) \cite{Heemels2013,heemels2013periodic}, this problem is overcome by evaluating the trigger rule only periodically at fixed sampling times.  New information is  transmitted at a sampling time if the trigger rule indicates it. Anyhow, whilst the number of transmissions can still be reduced in comparison to time-triggered control, stability and performance guarantees from ETC are in general not preserved for PETC.  Thus, it is desirable to find mechanisms tailored for PETC, such that stability, robustness to network imperfections and performance goals as a certain convergence rate of the system state can be guaranteed. Especially for nonlinear system dynamics, the design of such PETC mechanisms is a challenging task and deserves a comprehensive investigation.

There exists a bunch of results considering PETC for linear systems, see e.g. \cite{Heemels2013,heemels2013periodic} for an overview.  In dynamic PETC, a dynamically changing trigger rule is used like in dynamic ETC. 
	In \cite{linsenmayer2018event2}, dynamic PETC for linear systems with robustness to packet loss is investigated. 
 PETC results for nonlinear systems, either static or dynamic, are more rare. 
 In \cite{Borgers2018a}, an ETC trigger rule is overapproximated to obtain a PETC trigger rule. In \cite{Postoyan2013,wang2019periodic}, a PETC is emulated based on a stabilizing continuous-time controller. In \cite{wang2019periodic}, stability guarantees rely on the existence of a \change{special}{} hybrid Lyapunov function. For specific classes of nonlinear systems, PETC is investigated in \cite{Etienne2017,xu2018periodic}. In \cite{proskurnikov2018lyapunov} it is shown, that ETC and PETC with a known (and chosen) convergence rate for nonlinear systems can be designed, provided that a control Lyapunov function  is known for the continuous-time system. 
 
\arxiv{}{\pubidadjcol}

Even  though first results on PETC of nonlinear systems are available, research is still at an early stage and it is thus desirable to find improved PETC mechanisms. It is furthermore worthwhile to prolongate the maximum admissible sampling period (MASP) of the PETC, which plays an important role for reducing the number of transmissions over the network.
Also, the influence of network imperfections like packet loss on the PETC deserves a thorough investigation, in order to deal with aspects that arise in real world applications of NCS.

In this paper, we present a novel dynamic trigger mechanism for PETC of nonlinear continuous-time systems that can guarantee stability and a chosen averaged convergence rate if a controller and a Lyapunov function for the continuous-time system are known. The proposed trigger mechanism is based on non-monotonic Lyapunov functions and can be viewed as a nonlinear counterpart to the dynamic trigger mechanism from \cite{linsenmayer2018event2}. It can be applied to a wide class of nonlinear systems  
 and is robust to random packet loss if a bound on the number of successive lost packets is known.
 
  A lower bound on the MASP for the novel trigger mechanism is constructed based on an extension of results from \cite{proskurnikov2018lyapunov} considering non-monotonic Lyapunov functions \cite{michel2015stability} such that stability and a certain averaged convergence rate (here  indicated  by a parameter $\sigma$) can be guaranteed. This $\sigma$-MASP bound  depends on level sets of the considered continuous-time Lyapunov function and is increased in most cases by factor  $\frac{9}{4}$ in comparison to the $\sigma$-MASP for the PETC from \cite{proskurnikov2018lyapunov} while still guaranteeing the same averaged worst case convergence rate except a time shift of $(m+1)$ sampling periods of the PETC, if the number of successive lost packets is bounded by $m$. 

\arxiv{}{This paper is the accepted version of \cite{hertneck2019nonlinear}, containing also the proofs of our main results.}

The remainder of this paper is structured as follows. The problem setup is described in Section~\ref{sec_setup}. Some basic results from~\cite{proskurnikov2018lyapunov} and \cite{michel2015stability} are recapped in Section~\ref{sec_basic}. The improved bound on the $\sigma$-MASP and the dynamic trigger mechanism are presented in Section~\ref{sec_main_res}. A numerical example to illustrate the proposed PETC mechanism is given in Section~\ref{sec_example} and Section~\ref{sec_conclusion} concludes the paper. \arxiv{}{Some spacious proofs are given in the Appendix.}

 \subsubsection*{Notation}
The positive (respectively nonnegative) real numbers are denoted by $\mathbb{R}_{>0}$,  respectively  $\mathbb{R}_{\geq 0} = \mathbb{R}_{> 0} \cup \{0\} $. The positive (respectively nonnegative) natural numbers are denoted by $\mathbb{N}$, respectively $\mathbb{N}_0:=\mathbb{N}\cup  \left\lbrace 0 \right\rbrace $. A continuous function $\alpha: \mathbb{R}_{\geq 0} \rightarrow \mathbb{R}_{\geq 0}$ is a class $ \mathcal{K}$ function ,i.e., $\alpha \in \mathcal{K}$), if $\alpha$ is strictly increasing and $\alpha(0) = 0$. The notation $t^-$ is used as $t^- := \lim\limits_{s<t,s\rightarrow t} s$. A continuous function $V:\mathbb{R}^n \rightarrow \mathbb{R}$ is positive definite if $V(0) = 0$ and $V(x)>0$ for all  $x \neq 0$.  $V'(k)$ denotes $\left. \frac{\partial V(x)}{\partial x} \right|_{x = k}$.  Furthermore, we use in a slight abuse of notation $\mathcal{L}_fV(x,u)$ to denote the Lie derivative of $V$ along the vector field  $f:\mathbb{R}^n \times \mathbb{R}^b \rightarrow \mathbb{R}^n$, i.e. $\mathcal{L}_fV(x,u) = V^{'}(x)  f(x,u) $.

\section{Problem Setup}
\label{sec_setup}
In this section, we present the setup of this paper and formalize the control objective. 
\change{We introduce the considered setup for the NCS,  model the network, present the triggering strategy  and state a convergence criterion for the PETC.}{}
\subsection{Basic Setup}
	We consider a nonlinear, time-invariant system
	\begin{equation}
	\label{eq_sys_cont}
		\dot{x} = f(x,u)
	\end{equation}
	with a smooth vector valued function $f:\mathbb{R}^{n}\times\mathbb{R}^{b} \rightarrow \mathbb{R}^n$ satisfying $f(0,0) = 0$, 
	the system state $ x(t)\in \mathbb{R}^n$ with initial condition $x(0) = x_0$ and the input $u(t)\in\mathbb{R}^b$. The input is generated by 
	\begin{equation}
	\label{eq_cont_fb}
	u = \kappa(\hat{x})
	\end{equation} 
	with the nonlinear feedback law $\kappa:\mathbb{R}^n\rightarrow\mathbb{R}^b$ and a prediction $\hat{x}(t)$ of the system state $x(t)$ that is generated at the actuator based on transmitted state information and $\hat{x}(0) = \hat{x}_0$. The time instants, when state information is received by the actuator are given by the infinite sequence $(\tau_k)_{k \in \mathbb{N}_0}$  and define a discrete set 
\begin{equation*}
		\mathcal{T} := \left\lbrace \tau_0, \tau_1,\tau_2,\dots \right\rbrace.
	\end{equation*}
	The sequence $\mathcal{T}$ depends on a trigger mechanism, that will be designed in this paper, and on the capabilities of the communication network. However, we assume that current state information is received successfully at $t=0$ and thus have  $\tau_0 = 0$ and $\hat{x}_0 = x_0$. The update of $\hat{x}$ at $t \in \mathcal{T}$ is represented by $\hat{x}(t)  = x(t)$.  Between the update times, a state prediction can be designed based on the computational capabilities of the actuator as 
	\begin{equation}
	\label{eq_cont_est}
		\dot{\hat{x}}(t) = f_e(\hat{x}(t)),~  t \notin \mathcal{T}
	\end{equation} 
	 with $f_e(0) = 0$. If there are no computational capabilities, we can choose $f_e(x) = 0$, 
	 which corresponds to the zero order hold (ZOH) case.  In this case, the next input $\hat{u} = \kappa(x)$ can be transmitted instead of the system state $x$ even though we subsequently model the general case with $\hat{x}$ as a state. 

	 The closed-loop system combined of \eqref{eq_sys_cont}, controller \eqref{eq_cont_fb}, prediction \eqref{eq_cont_est}
	and its reset condition can be described as a discontinuous dynamical system (DDS) with state $\xi = \left[\xi_1^\top \xi_2^\top \right]^\top =  \left[x^\top \hat{x}^\top\right]^\top$ as 
	 \begin{align}
	 \dot{\xi}(t) =& 
	 \begin{bmatrix}
	 f(\xi_1(t),\kappa(\xi_2(t))) \\
	 f_{e}(\xi_2(t))
	 \end{bmatrix},~t \notin \mathcal{T}
	 , \nonumber \\
	 \xi(t) =& \begin{bmatrix}
	 I\\
	 I
	 \end{bmatrix} \xi_1(t^-), t \in \mathcal{T}\backslash \left\lbrace 0 \right\rbrace,
	 \label{eq_cont_sys_comp}
	 \end{align}
	 where $\xi(0) = \left[x_0^\top \hat{x}_0^\top\right]^\top = \xi_0$.
	 In order to design the PETC, we assume that a continuous-time feedback \change{law }{}and a Lyapunov function are known, \change{that satisfy}{satisfying} the following assumption.
 	\begin{asum} (cf. \cite{proskurnikov2018lyapunov})
		\label{asum_cont_clf}
		There is a continuous, positive definite function $V_\CLF:\mathbb{R}^n \rightarrow \mathbb{R}$, satisfying 
		\begin{equation}
			\label{eq_cont_clf2}
			\alpha_1 {(\norm{x})} \leq V_\CLF(x) \leq \alpha_2 {(\norm{x})}, 	
		\end{equation}	
		\begin{equation}
			\label{eq_cont_clf}
			V_\CLF^{'}(x(t))  f(x(t),\kappa(x(t))) \leq - \gamma(V_\CLF(x(t))).
		\end{equation}
		with class $\mathcal{K}$ functions $\gamma, \alpha_1, \alpha_2 $.
	\end{asum}
	Finding $\kappa$ and $V_\CLF$ that satisfy Assumption~\ref{asum_cont_clf} is a fundamental problem in control theory for continuous-time systems and is widely discussed in literature, see e.g.  \cite{khalil2002nonlinear}. Thus we will not review it here in more detail.
	
	Subsequently, we consider local results for a level set of $V_\CLF$, \change{that can be }{} defined as $\mathcal{X}_c := \left\lbrace x | V_\CLF(x) \leq c \right\rbrace$ for a chosen $c \in 
	\mathbb{R}_{>0}$.

\subsection{Network Model and Triggering Strategy}
We consider  an unreliable network that can transmit packets periodically with the sampling period $ h \in \mathbb{R}_{>0}$. 
 Network imperfections are modeled using the following assumption.
\begin{asum}
	\label{asum_loss_bound}
	The consecutive number of lost packets is bounded by $m\in\mathbb{N}$ and there is an acknowledgment if a transmission was successful.
\end{asum}
This assumption allows random packet dropouts and requires no knowledge about the underlying probability distribution as long as the boundedness condition is satisfied. It resembles the scenario where messages are dropped if they have a delay that is not negligible when the occurrence of such delays can be limited to a bounded number of successive transmissions.

 Necessary transmissions of the system state are detected using PETC, i.e. according to a trigger mechanism that is evaluated at discrete, evenly distributed time instants with the sampling period $ h $. The trigger rule of the PETC mechanism is thus evaluated at sampling times $t = z h$ for all $z\in\mathbb{N}_0$. If the trigger rule of the PETC mechanism is violated at a sampling time, a transmission of the system state is triggered. 

\subsection{Convergence Criterion and Control Objective}
A common convergence criterion based on $V_\CLF$ and $\gamma$ from Assumption~\ref{asum_cont_clf}, that is used e.g. in \cite{proskurnikov2018lyapunov}, is
\begin{equation}
\label{eq_cont_cont_conv}
\frac{d}{dt} V_\CLF(x(t)) \leq -\sigma \gamma(V_\CLF(x(t))),~\forall t \geq 0
\end{equation}
for some $\sigma \in \left(0,1\right)$. We note that if \eqref{eq_cont_cont_conv} holds, then it holds due to the comparison Lemma \cite[pp. 102-103]{khalil2002nonlinear} that $V_\CLF(x(t)) \leq S(t,x_0)$, where $S(t,x_0)$ is the solution of 
\begin{equation}
\label{eq_cont_def_s}
\frac{d}{dt} S(t,x_0) = -\sigma \gamma(S(t,x_0)), ~S(0,x_0) = V_\CLF(x_0).
\end{equation}
Thus $S(t,x_0)$ describes the worst case convergence behavior for \eqref{eq_cont_cont_conv} and can be used as a convergence criterion as e.g. discussed in \cite{ong2018event}, where an ETC is designed with a performance barrier based on $S(t,x_0)$. 
For our PETC mechanism, we use an averaged criterion similar to \eqref{eq_cont_cont_conv} that can be described as
\begin{equation}
	V_\CLF(x(t+(m+1)h)) \leq S(t,x_0)~ \forall t \geq 0\label{eq_cont_v_smaller}.
\end{equation}
Thus, if we use \eqref{eq_cont_v_smaller} as convergence criterion, then we require the same averaged worst case convergence rate as if we consider \eqref{eq_cont_cont_conv} except for a time shift depending on the sampling rate $h$ and the bound on successive lost packets $m$ that is small if $m$ and $h$ are small. We define the maximum admissible sampling period such that \eqref{eq_cont_v_smaller} can be guaranteed  as $\sigma$-MASP.

The goal of this paper is to find a lower bound on the $\sigma$-MASP that can be used to determine $h$ and to find a corresponding PETC mechanism, both such that asymptotic stability of the origin of the DDS~\eqref{eq_cont_sys_comp} is guaranteed and the convergence criterion~\eqref{eq_cont_v_smaller} is satisfied for all initial conditions from the level set $\mathcal{X}_c$. 

\section{Basic Results}
\label{sec_basic}
Before we present our main results, i.e. how the $\sigma$-MASP bound and the trigger mechanism can be constructed, we recap some results from literature that are important ingredients for the proposed PETC approach. First, we present a sufficient local stability condition for the DDS~\eqref{eq_cont_sys_comp} based on non-monotonic Lyapunov functions, that is a special case of Theorem~6.4.2 from~\cite{michel2015stability}. 
\change{This result can be used for guaranteeing asymptotic stability for the DDS~\eqref{eq_cont_sys_comp} controlled with the proposed PETC mechanism. }{}
In the second subsection, we state a sufficient condition for the convergence criterion \eqref{eq_cont_v_smaller} that is easier to verify than the criterion itself. In the last subsection, we recap a technical result and a set of assumptions from \cite{proskurnikov2018lyapunov}, which will be useful for designing the PETC mechanism and for constructing a lower bound on the $\sigma$-MASP.

\subsection{Non-Monotonic Stability Results for Event-Triggered NCS}
\change{Now, we present the sufficient stability condition for the DDS model~\eqref{eq_cont_sys_comp}  that is a special case of Theorem~6.4.2 from~\cite{michel2015stability}. It can be formulated as the following proposition.}{The stability condition for the DDS model~\eqref{eq_cont_sys_comp}  based on Theorem~6.4.2 from~\cite{michel2015stability} can be formulated as follows.}

\begin{prop}
	\label{prob_michel}
	Observe the DDS given by \eqref{eq_cont_sys_comp}. Assume that the unbounded discrete subset $\mathcal{T}$ of $\mathbb{R}_{\geq 0}$ satisfies 
	\begin{equation}
	\label{eq_non_mon_dec3}
		0 < \underline{\eta} \leq \tau_{k+1} - \tau_{k} \leq \overline{\eta}~ \forall k\in\mathbb{N}_0
	\end{equation}
	and $\tau_0 = 0$. Furthermore, assume there is a continuous positive definite function $V:\mathbb{R}^{2n} \rightarrow \mathbb{R},$ such that for all $k\in \mathbb{N}_0,$ and all $\xi(\tau_k) \in \mathcal{X}_{c,2} $, where $\mathcal{X}_{c,2} := \left\lbrace \xi|V(\xi) \leq c \right\rbrace$ for the chosen $c \in \mathbb{R}_{>0}$ and  class $\mathcal{K}$ functions $\alpha_3, \alpha_4, \gamma_2 $,
	\begin{equation}
	\label{eq_non_mon_dec4}
		\alpha_3 {(\norm{\xi})} \leq V(\xi) \leq \alpha_4 {(\norm{\xi})} 	
	\end{equation}
	\begin{equation}
	\label{eq_non_mon_desc1}
		V(\xi(\tau_k+r)) \leq V(\xi(\tau_k)), ~ 0 \leq r \leq \tau_{k+1} -\tau_k
	\end{equation}
	and
	\begin{equation}
		\label{eq_non_mon_desc2}
		\frac{1}{\tau_{k+1} - \tau_k} \left[ V(\xi(\tau_{k+1})) - V(\xi(\tau_k))\right] \leq -\gamma_2 (V(\xi(\tau_k)))
	\end{equation}
	hold.
	Then the equilibrium $\xi = 0$ is asymptotically stable for \eqref{eq_cont_sys_comp}  with region of attraction $\mathcal{X}_{c,2}$. 
\end{prop}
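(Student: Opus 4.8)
The plan is to give a direct Lyapunov-type argument along the sampling sequence $(\tau_k)_{k\in\mathbb{N}_0}$, rather than invoking the machinery of \cite{michel2015stability} abstractly; equivalently, one may simply verify that the hypotheses \eqref{eq_non_mon_dec3}--\eqref{eq_non_mon_desc2} specialize the assumptions of Theorem~6.4.2 in \cite{michel2015stability}. The first step is to establish that $\mathcal{X}_{c,2}$ is forward invariant, so that the conditional hypotheses (which are only assumed while $\xi(\tau_k)\in\mathcal{X}_{c,2}$) continue to apply along the whole trajectory. I would do this by induction on $k$: if $\xi(\tau_k)\in\mathcal{X}_{c,2}$, then \eqref{eq_non_mon_desc2} together with $\gamma_2\in\mathcal{K}$ and $\tau_{k+1}-\tau_k>0$ gives $V(\xi(\tau_{k+1}))\leq V(\xi(\tau_k))\leq c$, so $\xi(\tau_{k+1})\in\mathcal{X}_{c,2}$, while \eqref{eq_non_mon_desc1} extends the bound $V\leq c$ to the inter-sample times. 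Since $\xi(\tau_0)=\xi_0\in\mathcal{X}_{c,2}$, the entire trajectory remains in $\mathcal{X}_{c,2}$, which will also yield the claimed region of attraction.

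For Lyapunov stability of the origin, I would combine the two decrease conditions into the single statement $V(\xi(t))\leq V(\xi_0)$ for all $t\geq 0$: \eqref{eq_non_mon_desc2} makes $k\mapsto V(\xi(\tau_k))$ non-increasing, and \eqref{eq_non_mon_desc1} dominates each inter-sample value by the preceding sampling value. Sandwiching with the class $\mathcal{K}$ bounds \eqref{eq_non_mon_dec4} then gives $\alpha_3(\norm{\xi(t)})\leq V(\xi(t))\leq V(\xi_0)\leq \alpha_4(\norm{\xi_0})$, so that $\norm{\xi(t)}\leq \alpha_3^{-1}(\alpha_4(\norm{\xi_0}))$. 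Given $\varepsilon>0$, choosing $\delta=\alpha_4^{-1}(\alpha_3(\varepsilon))$ (the inverses exist and are continuous at $0$ since $\alpha_3,\alpha_4\in\mathcal{K}$) yields $\norm{\xi_0}<\delta\Rightarrow\norm{\xi(t)}<\varepsilon$ for all $t\geq 0$, which is the desired stability estimate.

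For attractivity I would argue that the non-increasing, nonnegative sequence $V(\xi(\tau_k))$ converges to some $V^\ast\geq 0$, and then rule out $V^\ast>0$. Using the lower bound $\tau_{k+1}-\tau_k\geq\underline{\eta}$ in \eqref{eq_non_mon_desc2}, and the fact that $V(\xi(\tau_k))\geq V^\ast$ with $\gamma_2$ strictly increasing, I obtain the uniform per-step decrease $V(\xi(\tau_{k+1}))-V(\xi(\tau_k))\leq -\underline{\eta}\,\gamma_2(V^\ast)$. Telescoping over $k=0,\dots,K-1$ would then force $V(\xi(\tau_K))\leq V(\xi_0)-K\underline{\eta}\,\gamma_2(V^\ast)\to-\infty$, contradicting $V\geq 0$; hence $V^\ast=0$ and, by the lower class $\mathcal{K}$ bound, $\xi(\tau_k)\to 0$. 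Finally, \eqref{eq_non_mon_desc1} propagates this to all times, since every $t\geq 0$ lies in some $[\tau_k,\tau_{k+1})$ with $V(\xi(t))\leq V(\xi(\tau_k))\to 0$, and $\tau_k\to\infty$ because $\mathcal{T}$ is unbounded with increments at least $\underline{\eta}$; thus $\xi(t)\to 0$.

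The main obstacle I anticipate is not any single estimate but the interplay that makes the argument well-posed: the decrease and boundedness hypotheses are only postulated while the state sits in $\mathcal{X}_{c,2}$, so the forward-invariance induction of the first paragraph must be carried out before the stability and attractivity estimates can be legitimately invoked. A secondary care point is that $V$ is only controlled at the sampling instants by \eqref{eq_non_mon_desc2} and must be bridged across inter-sample intervals by \eqref{eq_non_mon_desc1}; here the upper bound $\tau_{k+1}-\tau_k\leq\overline{\eta}$ guarantees that the sampling instants are not too sparse, so that $t\to\infty$ indeed corresponds to $k\to\infty$ and the inter-sample bound remains meaningful.
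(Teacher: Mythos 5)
Your argument is correct, but it takes a different route from the paper: the paper's entire proof is a one-line citation, observing that the hypotheses \eqref{eq_non_mon_dec3}--\eqref{eq_non_mon_desc2} are exactly a special case of Theorem~6.4.2 in \cite{michel2015stability}, whereas you reconstruct that theorem's proof from scratch. Your three-step structure --- forward invariance of $\mathcal{X}_{c,2}$ by induction (needed because the decrease hypotheses are only postulated while $\xi(\tau_k)\in\mathcal{X}_{c,2}$), the $\varepsilon$--$\delta$ stability estimate via the sandwich $\alpha_3(\norm{\xi(t)})\leq V(\xi(t))\leq V(\xi_0)\leq\alpha_4(\norm{\xi_0})$, and attractivity by telescoping the uniform per-step decrease $-\underline{\eta}\,\gamma_2(V^\ast)$ to contradict $V\geq 0$ --- is sound, and the telescoping works for every $\xi_0\in\mathcal{X}_{c,2}$, so it does deliver the claimed region of attraction (you should only add that $\delta$ must additionally be taken small enough that the corresponding initial conditions lie in $\mathcal{X}_{c,2}$, so that the conditional hypotheses apply at $k=0$). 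What your version buys is a self-contained, verifiable argument that makes the role of each hypothesis explicit, in particular why the invariance induction must precede the stability and attractivity estimates and why both $\underline{\eta}$ and $\overline{\eta}$ are needed; what the paper's citation buys is brevity and the extra conclusions it inherits from the general theorem, notably existence and uniqueness of solutions of the DDS~\eqref{eq_cont_sys_comp}, which the paper explicitly relies on and which your trajectory-based argument tacitly presupposes.
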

\begin{proof}
	Follows \change{as a special case of}{from} Theorem~6.4.2 from \cite{michel2015stability}. 
\end{proof}
Thus, if the Lyapunov function $V(\xi(t))$ decreases along the sequence $(\tau_k)_{k\in \mathbb{N}_0}$ and is bounded between successive times $\tau_k$ and $\tau_{k+1}$ from $\mathcal{T}$ by the value at the last successful transmission, i.e. $V(\xi(\tau_k))$, and if in addition the time between successive instants from $\mathcal{T}$ is uniformly lower and upper bounded, then asymptotic stability follows. We use this later in order to prove stability for the DDS~\eqref{eq_cont_sys_comp}, controlled with the proposed PETC mechanism. Proposition~\ref{prob_michel} implies also the existence and uniqueness of solutions for the DDS~\eqref{eq_cont_sys_comp}, for details see \cite{michel2015stability}. Moreover, invariance of $\mathcal{X}_{c,2}$ is a direct consequence of \eqref{eq_non_mon_desc1} and $\tau_0 = 0$. Henceforth, we will consider the ZOH case, i.e., $f_e(x) = 0$ for which the conditions from Proposition~\ref{prob_michel} can be simplified as follows.

\begin{prop}
	\label{prop_help_eq}
		Let Assumption~\ref{asum_cont_clf} hold and assume $f_e(x) = 0 $. Then, \eqref{eq_non_mon_dec4}, \eqref{eq_non_mon_desc1} and \eqref{eq_non_mon_desc2} hold for the DDS~\eqref{eq_cont_sys_comp} and $V(\xi) = \frac{1}{2} \left(V_\CLF(\xi_1) + V_\CLF(\xi_2)\right)$, if
		\begin{equation}
		\label{eq_v_desc}
		V_\CLF(x(\tau_k+\tone)) \leq V_\CLF(x(\tau_k)) 
		\end{equation} 
		holds for $0\leq \tone < \tau_{k+1}-\tau_k$, and all $k \in \mathbb{N}_0$ on $\mathcal{X}_c$ and 
		\begin{align}
		\nonumber
		&\frac{1}{\tau_{k+1} - \tau_k} \left[ V_\CLF(x(\tau_{k+1})) - V_\CLF(x(\tau_k))\right]\\
		\leq& -\gamma_2 (V_\CLF(x(\tau_k))) \label{eq_v_desc_2}
		\end{align}
		holds for all $k\in\mathbb{N}_0$ on $\mathcal{X}_c$. Furthermore, $x(\tau_k) \in\mathcal{X}_c$ implies $\xi(\tau_k) \in\mathcal{X}_{c,2}$.
\end{prop}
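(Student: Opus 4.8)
The plan is to exploit the special structure of the ZOH case, in which the prediction component $\xi_2=\hat{x}$ is piecewise constant and coincides with the plant state exactly at the transmission times. First I would record the pivotal identities that follow from the reset in \eqref{eq_cont_sys_comp} together with $f_e\equiv 0$. Since $\hat{x}(\tau_k)=x(\tau_k)$ (the plant state $\xi_1=x$ does not jump, only $\hat{x}$ is reset) and $\dot{\hat{x}}=0$ on $(\tau_k,\tau_{k+1})$, one has $\xi_1(\tau_k+\tone)=x(\tau_k+\tone)$ and $\xi_2(\tau_k+\tone)=x(\tau_k)$ for $0\le \tone<\tau_{k+1}-\tau_k$. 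Substituting into $V(\xi)=\tfrac12\big(V_\CLF(\xi_1)+V_\CLF(\xi_2)\big)$ yields the two working relations
\begin{align}
V(\xi(\tau_k)) &= V_\CLF(x(\tau_k)), \nonumber \\
V(\xi(\tau_k+\tone)) &= \tfrac12\big(V_\CLF(x(\tau_k+\tone))+V_\CLF(x(\tau_k))\big), \nonumber
\end{align}
the first being the $\tone=0$ instance of the second. These collapse every condition on the $2n$-dimensional $\xi$ into a condition on the plant state $x=\xi_1$ alone.

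Next I would verify \eqref{eq_non_mon_dec4} directly from Assumption~\ref{asum_cont_clf}, independently of the trigger hypotheses. Using $\alpha_1(\norm{\xi_i})\le V_\CLF(\xi_i)\le \alpha_2(\norm{\xi_i})$ for $i\in\{1,2\}$ together with $\max(\norm{\xi_1},\norm{\xi_2})\le \norm{\xi}\le \sqrt{2}\,\max(\norm{\xi_1},\norm{\xi_2})$, I keep only the larger of the two nonnegative terms in the lower estimate to obtain $V(\xi)\ge \tfrac12\alpha_1\!\big(\norm{\xi}/\sqrt{2}\big)$, and I bound the sum from above by $V(\xi)\le \alpha_2(\norm{\xi})$. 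Thus $\alpha_3(s):=\tfrac12\alpha_1(s/\sqrt{2})$ and $\alpha_4(s):=\alpha_2(s)$ are the required class $\mathcal{K}$ functions, and the same sandwich shows $V$ is positive definite. This step is routine but fixes the objects demanded by Proposition~\ref{prob_michel}.

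The core of the argument is then the substitution of the two working relations into \eqref{eq_non_mon_desc1} and \eqref{eq_non_mon_desc2}. For \eqref{eq_non_mon_desc2}, both endpoints $\tau_k,\tau_{k+1}$ are transmission times, so the first relation gives $V(\xi(\tau_{k+1}))=V_\CLF(x(\tau_{k+1}))$ and $V(\xi(\tau_k))=V_\CLF(x(\tau_k))$, and the inequality becomes \eqref{eq_v_desc_2} verbatim. For \eqref{eq_non_mon_desc1} on the open part $0\le \tone<\tau_{k+1}-\tau_k$, the second relation turns $V(\xi(\tau_k+\tone))\le V(\xi(\tau_k))$ into $\tfrac12\big(V_\CLF(x(\tau_k+\tone))+V_\CLF(x(\tau_k))\big)\le V_\CLF(x(\tau_k))$, which is exactly \eqref{eq_v_desc}. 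The region implication is immediate from the first relation: $x(\tau_k)\in\mathcal{X}_c$ means $V_\CLF(x(\tau_k))\le c$, hence $V(\xi(\tau_k))\le c$, i.e. $\xi(\tau_k)\in\mathcal{X}_{c,2}$; the converse also holds, so the two level sets coincide at sampling times, which is precisely what lets hypotheses stated on $\mathcal{X}_c$ feed Proposition~\ref{prob_michel} stated on $\mathcal{X}_{c,2}$.

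The only genuine subtlety, and the step I would treat most carefully, is the closed endpoint $\tone=\tau_{k+1}-\tau_k$ of \eqref{eq_non_mon_desc1}, where the second working relation no longer applies because the reset at $\tau_{k+1}$ makes both components equal again. There I would argue by continuity of $x(\cdot)$: the left limit of \eqref{eq_v_desc} gives $V_\CLF(x(\tau_{k+1}))=V_\CLF(x(\tau_{k+1}^-))\le V_\CLF(x(\tau_k))$, whence $V(\xi(\tau_{k+1}))=V_\CLF(x(\tau_{k+1}))\le V_\CLF(x(\tau_k))=V(\xi(\tau_k))$; alternatively, this endpoint inequality is already implied by \eqref{eq_v_desc_2}. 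Everything else reduces to bookkeeping with the two working relations.
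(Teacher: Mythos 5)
Your proof is correct and takes essentially the same route as the paper's: both rest on the ZOH identities $\xi_1(t)=x(t)$, $\xi_2(\tau_k+\tone)=x(\tau_k)$ on $[\tau_k,\tau_{k+1})$ and $\xi_2(\tau_{k+1})=x(\tau_{k+1})$, from which the three conditions collapse onto \eqref{eq_v_desc} and \eqref{eq_v_desc_2}. You simply make explicit two details the paper leaves implicit (the class $\mathcal{K}$ bounds $\alpha_3,\alpha_4$ and the endpoint $\tone=\tau_{k+1}-\tau_k$ of \eqref{eq_non_mon_desc1}), both handled correctly.
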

\begin{proof}
	Due to Assumption~\ref{asum_cont_clf}, \eqref{eq_non_mon_dec4} holds. Since $f_e(x) = 0$, it follows that $\xi_2(\tau_k+\tone) = \xi_1(\tau_k) = x(\tau_k)$ for $0 \leq \tone < \tau_{k+1}-\tau_k$ . \change{Furthermore, we}{We} have $\xi_1(t) = x(t)$ for all $t$ and  $\xi_2(\tau_{k+1}) = \xi_1(\tau_{k+1}) = x(\tau_{k+1})$ due to the structure of the DDS~\eqref{eq_cont_sys_comp}. The proposition follows then directly from \eqref{eq_v_desc} and \eqref{eq_v_desc_2}.
\end{proof}

\subsection{Alternative Characterization of the Convergence Criterion}
In this subsection, we present a sufficient condition for the convergence criterion \eqref{eq_cont_v_smaller} \change{}{for two arbitrary time points}. 
\change{The condition can be used between two arbitrary time points, which can e.g. be chosen as two sampling times of the PETC.}{} 
\change{It}{The condition} does not require explicit knowledge  of $S(t,x_0)$ and will turn out to be useful later in the PETC design.
\begin{prop}
	\label{prop_cont_conv}
	Consider two constants $C_1$, $C_2 ~\in \mathbb{R}_{\geq 0}$,  and $S(t,x_0)$ defined by \eqref{eq_cont_def_s}. If 
\change{	\begin{align}
	&C_1   	\leq C_2- \tone \sigma  \gamma(C_2) \label{eq_cont_conv}
	\end{align}}{$C_1\leq C_2- \tone \sigma  \gamma(C_2)$}
	and  $C_2 \leq S(s,x_0)$  	holds for $  \tone,s \in \mathbb{R}_{\geq 0}$ ,
	then 
	\change{\begin{equation}
	\label{eq_cont_conv_s}
	C_1 \leq S(s+\tone,x_0).
	\end{equation}}{$C_1 \leq S(s+\tone,x_0).$}
\end{prop}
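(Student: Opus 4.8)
The plan is to compare $S(s+\cdot,x_0)$ against an auxiliary trajectory of the same defining ODE~\eqref{eq_cont_def_s} started from the smaller value $C_2$, and to recognize the right-hand side $C_2-\tone\sigma\gamma(C_2)$ as a one-step Euler underestimate of that trajectory. Concretely, I would let $\tilde S(\cdot)$ denote the solution of $\frac{d}{d\tau}\tilde S(\tau) = -\sigma\gamma(\tilde S(\tau))$ with $\tilde S(0)=C_2$, and then establish the chain
\[
C_1 \;\leq\; C_2-\tone\sigma\gamma(C_2)\;\leq\;\tilde S(\tone)\;\leq\;S(s+\tone,x_0),
\]
where the first inequality is exactly the hypothesis on $C_1$ and $C_2$.

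For the middle inequality I would first observe that $\tilde S$ is non-increasing, since $\gamma\in\mathcal{K}$ gives $\gamma\geq 0$ and hence $\frac{d}{d\tau}\tilde S\leq 0$; together with $\tilde S(0)=C_2\geq 0$ and the equilibrium at $0$ this yields $0\leq\tilde S(\tau)\leq C_2$ for all $\tau\geq 0$. Writing $\tilde S$ in integral form, $\tilde S(\tone)=C_2-\int_0^{\tone}\sigma\gamma(\tilde S(\tau))\,d\tau$, and using that $\gamma$ is increasing so that $\gamma(\tilde S(\tau))\leq\gamma(C_2)$, I obtain $\tilde S(\tone)\geq C_2-\tone\sigma\gamma(C_2)$.

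For the last inequality I would exploit that $S$ solves an \emph{autonomous} ODE: by uniqueness the map $\tau\mapsto S(s+\tau,x_0)$ is precisely the solution of $\frac{d}{d\tau}y=-\sigma\gamma(y)$ with initial value $y(0)=S(s,x_0)$. Since $\tilde S$ solves the same equation with the ordered initial value $\tilde S(0)=C_2\leq S(s,x_0)$, the comparison Lemma \cite[pp.~102--103]{khalil2002nonlinear} (equivalently, the non-crossing of scalar autonomous trajectories) gives $\tilde S(\tone)\leq S(s+\tone,x_0)$. Chaining the three inequalities then yields $C_1\leq S(s+\tone,x_0)$, as claimed.

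The main obstacle I expect is this comparison step: it relies on the monotone dependence of solutions on their initial data, which is where the regularity of $\gamma$ enters, since a class $\mathcal{K}$ function need not be locally Lipschitz. I would handle this in line with the paper's own invocation of the comparison Lemma just before \eqref{eq_cont_def_s} — either assuming the regularity under which $S$ is well defined and unique, or arguing directly from the fact that $S(\cdot,x_0)$ is strictly decreasing, so that the flow can be inverted and the two trajectories are monotonically ordered by their initial conditions. The remaining estimates are elementary once the integral representation and the monotonicity of $\gamma$ are in place.
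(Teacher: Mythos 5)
Your proof is correct, but it takes a different route from the paper's. The paper argues directly on $S$ itself: it lets $t_1$ be the first time after $s$ at which $S$ reaches the level $C_2$, disposes of $\tone\leq t_1$ by noting $S(s+\tone,x_0)\geq C_2\geq C_2-\tone\sigma\gamma(C_2)\geq C_1$, and for $\tone>t_1$ integrates \eqref{eq_cont_def_s} from $t_1$ to $\tone$, bounding $\gamma(S(s+\theta,x_0))\leq\gamma(C_2)$ there because $S$ is non-increasing. You instead split trajectories rather than time: you introduce the auxiliary solution $\tilde S$ started at $C_2$, show $\tilde S(\tone)\geq C_2-\tone\sigma\gamma(C_2)$ by the same integral estimate, and then relate $\tilde S$ to $S(s+\cdot,x_0)$ via the comparison Lemma. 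The trade-off is that your route requires the extra trajectory-ordering step, and hence uniqueness (or at least monotone dependence on initial data) for $\dot y=-\sigma\gamma(y)$ with $\gamma$ merely continuous — a point you rightly flag, and which the paper's direct argument sidesteps entirely, since it only ever uses the integral representation of the single trajectory $S$ together with its monotonicity. In exchange, your version avoids the case split and the (minor, easily filled) need to justify that the hitting time $t_1$ exists; both are legitimate, and your caveat about regularity is consistent with the paper's own implicit assumption that $S$ is well defined as \emph{the} solution of \eqref{eq_cont_def_s}.
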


\begin{proof}
	Denote by $s+t_1$ with $t_1 \geq 0$ the first time after $s$, for which $S(s+t_1,x_0) =C_2$. 
	Then, we notice that $C_1 \leq S(s+\tone,x_0)$ if $0 \leq \tone\leq t_1$ by assumption.
	If $\tone > t_1$, then 
	\begin{align*}
	S(s+\tone,x_0)\overset{\eqref{eq_cont_def_s}}{=}& 	S(s+t_1,x_0) - \int_{t_1}^{\tone} \sigma \gamma(S(s+\theta,x_0)) d\theta \\
	\geq  &C_2  - \int_{t_1}^{\tone} \sigma\gamma(C_2)  d\theta\\
	= &C_2  - (\tone-t_1) \sigma  \gamma(C_2)	\geq C_1. 
	\end{align*}
\end{proof}

\subsection{A Time Dependent Bound on the Lyapunov Function}
In this subsection, we recap from \cite{proskurnikov2018lyapunov} how a time dependent and state independent upper bound on the time derivative of $V_\CLF(x(t))$ can be computed.
\change{ (identical assumptions as in \cite{proskurnikov2018lyapunov} can also be found in the conference version \cite{pros2018conf}). }{}
This bound is used in \cite{proskurnikov2018lyapunov} to compute a lower bound on the $\sigma$-MASP for which the decrease of $V_\CLF(x(t))$ with a chosen convergence rate according to \eqref{eq_cont_cont_conv} can be guaranteed. 

We will show in Section~\ref{sec_main_res} how an improved lower bound on the $\sigma$-MASP \change{}{with convergence criterion \eqref{eq_cont_v_smaller}} for the PETC can be obtained using the same upper bound on the time derivative of $V_\CLF(x(t))$   \change{such that asymptotic stability and satisfaction of the convergence criterion \eqref{eq_cont_v_smaller} can be guaranteed based on non-monotonic Lyapunov functions}{and non-monotonic Lyapunov functions}. The\change{considered}{} scenario is described by the following Assumptions.   
\begin{asum}
\label{as_cont_pro1}

(cf.  Assumption~1 and 2 from \cite{proskurnikov2018lyapunov}
) For the chosen $c\in\mathbb{R}_{>0}$, there is a finite Lipschitz constant $L_{1,c}$ satisfying
\begin{equation*}
L_{1,c} \overset{\triangle}{=} \underset{x_1,x_2,x_3\in\mathcal{X}_c, x_1 \neq x_2}{\sup} \frac{\norm{f(x_1,~\kappa(x_3))-f(x_2,~\kappa(x_3))}}{\norm{x_1-x_2}}.
\end{equation*} 
\end{asum}
This assumption implies, that the difference of the system dynamics for two points $x_1,x_2 \in \mathcal{X}_c$ with $u$ chosen as the feedback  $\kappa(x_3)$ from \eqref{eq_cont_fb} for an arbitrary third point $x_3 \in \mathcal{X}_c$ can be bounded by a Lipschitz constant $L_{1,c}$ as $ L_{1,c} \norm{x_1-x_2}$. This assumption needs to hold only on the considered level set of $V_\CLF(x)$, that is defined by $\mathcal{X}_c$ and is thus not too restrictive. 

\begin{asum}
	\label{as_cont_pro2}

	(	cf.	Assumption~3 in \cite{proskurnikov2018lyapunov}
)   For the chosen  $c\in\mathbb{R}_{>0}$, there is a finite Lipschitz constant $L_{2,c} \in \mathbb{R}$ satisfying
	\begin{equation*}
	L_{2,c} \overset{\triangle}{=} \underset{x_1,x_2\in\mathcal{X}_c, x_1 \neq x_2}{\sup} \frac{\norm{V_\CLF^{'}(x_1) -V_\CLF^{'}(x_2) }}{\norm{x_1-x_2}}.
	\end{equation*}
\end{asum}

\change{This assumption}{Assumption~\ref{as_cont_pro2}} imposes smoothness requirements on $V_\CLF(x)$ and holds e.g. when $V_\CLF(x)$ is twice continuously differentiable. 
\begin{asum}
	\label{as_cont_pro3}

	 (cf. Assumption~4 resp. Lemma~1 from \cite{proskurnikov2018lyapunov}) For the chosen $c\in\mathbb{R}_{>0}$, there is a positive definite function $M_c:\mathbb{R}^n \rightarrow \mathbb{R},$ bounded on $\mathcal{X}_c$, satisfying for all $\in\mathcal{X}_c$
	\begin{align*}
	&\norm{V_\CLF^{'}(x) } \norm{f(x,~\kappa(x))} + \norm{f(x,~\kappa(x))}^2\\
	 \leq& M_c(x) \abs{\partder{x} f(x,~\kappa(x))}	.
	\end{align*}

\end{asum}

This assumption excludes systems with solutions $x(t)$, that are fast oscillating with fast changing continuous-time control $\kappa(x(t))$. For such systems, no finite sampling rate would be sufficient to maintain the descent of $V_\CLF(x)$ below a chosen bound. A detailed discussion is given in \cite{proskurnikov2018lyapunov}. 

\change{In order to obtain a time dependent but state independent bound on $V_\CLF(x(t))$ after a successful transmission of the controller, }{}
We consider now an additional Cauchy problem $\dot{\tilde{x}}(t) = f(\tilde{x},u_*), \tilde{x}(0) = \tilde{x}_0 \in \mathcal{X}_c$  for the chosen $c \in \mathbb{R}_{>0}$ and some $u_*$. We define $t^*$ as the first time after $t = 0$ for which $V_\CLF(\tilde{x}(t)) \geq c$  and $\triangle_*(\tilde{x}_0,u_*) = \left[0, t^*\right]$. We obtain the following upper bound on the time derivative of $V_\CLF(\tilde{x}(t))$ that was derived in \cite{proskurnikov2018lyapunov}.

\begin{coro}
	\label{coro_cont_lyap}
	(deviated from Corollary~4 from \cite{proskurnikov2018lyapunov})	Let Assumptions~\ref{as_cont_pro1} and \ref{as_cont_pro2} hold for the  chosen $c\in\mathbb{R}_{>0}$. Let $\tilde{x}_0,~\tilde{x}_1 \in \mathcal{X}_c$ , $~u_* = \kappa(\tilde{x}_1)$, $t\in \triangle_*(\tilde{x}_0,u_*) \cap \left[0, (1+2L_{1,c})^{-1}\right]$ and $\tilde{x}(t)$ be the solution of $\dot{\tilde{x}}(t) = f(\tilde{x}(t),u_*), \tilde{x}(0) = \tilde{x}_0$. Then, 
	\begin{align}
	&\abs{\timeder{x(t)}{u_*} - \timeder{\tilde{x}_0}{u_*}} \nonumber \\
	\leq& \sqrt{t} \mu_c \left(\norm{ V_\CLF^{'}(\tilde{x}_0)}  \norm{f(\tilde{x}_0,u_*)} + \norm{f(\tilde{x}_0,u_*)}^2 \right),
	\end{align}
	where
\change{	\begin{equation}
	\mu_c \triangleq \sqrt{e} \max \left\lbrace L_{1,c},L_{2,c}(1+L_{1,c} \sqrt{e})\right\rbrace.
	\end{equation}}{$\mu_c \triangleq \sqrt{e} \max \left\lbrace L_{1,c},L_{2,c}(1+L_{1,c} \sqrt{e})\right\rbrace.$}
\end{coro}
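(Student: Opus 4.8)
The plan is to control the difference of Lie derivatives by an add-and-subtract decomposition, and then to bound the trajectory displacement $\norm{\tilde x(t)-\tilde x_0}$ by a Gronwall argument, using the time window $t\leq(1+2L_{1,c})^{-1}$ to convert the resulting growth factor into the constant $\sqrt e$ that appears inside $\mu_c$.

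First I would insert the cross term $\partder{\tilde x_0} f(\tilde x(t),u_*)$ and write
\begin{align*}
&\timeder{\tilde x(t)}{u_*}-\timeder{\tilde x_0}{u_*}\\
={}&\left(\partder{\tilde x(t)}-\partder{\tilde x_0}\right) f(\tilde x(t),u_*)\\
&+\partder{\tilde x_0}\left(f(\tilde x(t),u_*)-f(\tilde x_0,u_*)\right).
\end{align*}
Since $t\in\triangle_*(\tilde x_0,u_*)$, the whole arc $\tilde x(s)$, $s\in[0,t]$, remains in $\mathcal{X}_c$, so the Lipschitz bounds of Assumptions~\ref{as_cont_pro1} and~\ref{as_cont_pro2} apply along the trajectory and give $\norm{\partder{\tilde x(t)}-\partder{\tilde x_0}}\leq L_{2,c}\norm{\tilde x(t)-\tilde x_0}$ together with $\norm{f(\tilde x(t),u_*)-f(\tilde x_0,u_*)}\leq L_{1,c}\norm{\tilde x(t)-\tilde x_0}$.

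Next I would bound $\norm{\tilde x(t)-\tilde x_0}$. Writing $\tilde x(t)-\tilde x_0=\int_0^t f(\tilde x(s),u_*)\,ds$ and setting $g(s):=\norm{\tilde x(s)-\tilde x_0}$, the triangle inequality together with the $L_{1,c}$-bound yields $g(t)\leq t\norm{f(\tilde x_0,u_*)}+L_{1,c}\int_0^t g(s)\,ds$. Gronwall's inequality then gives $g(t)\leq t\norm{f(\tilde x_0,u_*)}e^{L_{1,c}t}$, and $t\leq(1+2L_{1,c})^{-1}$ forces $L_{1,c}t\leq\tfrac12$, hence $e^{L_{1,c}t}\leq\sqrt e$ and $\norm{\tilde x(t)-\tilde x_0}\leq\sqrt e\,t\norm{f(\tilde x_0,u_*)}$. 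The same Lipschitz estimate then bounds $\norm{f(\tilde x(t),u_*)}\leq\norm{f(\tilde x_0,u_*)}(1+L_{1,c}\sqrt e\,t)$. Substituting both into the decomposition and factoring out $\sqrt e\,t$ produces the two terms $L_{2,c}(1+L_{1,c}\sqrt e\,t)\norm{f(\tilde x_0,u_*)}^2$ and $L_{1,c}\norm{\partder{\tilde x_0}}\norm{f(\tilde x_0,u_*)}$; bounding $L_{1,c}\sqrt e\,t\leq L_{1,c}\sqrt e$ (valid since $t\leq 1$) and pulling out $\max\{L_{1,c},L_{2,c}(1+L_{1,c}\sqrt e)\}$ gives the coefficient $\mu_c$ with a prefactor $t$. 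Finally, $t\leq(1+2L_{1,c})^{-1}\leq1$ implies $t\leq\sqrt t$, which turns the bound into the asserted $\sqrt t\,\mu_c\bigl(\norm{\partder{\tilde x_0}}\norm{f(\tilde x_0,u_*)}+\norm{f(\tilde x_0,u_*)}^2\bigr)$.

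I expect the main obstacle to be the quantitative Gronwall step: obtaining the sharp constant $\sqrt e$ rather than a cruder bound (e.g.\ the factor $2$ that a direct $\sup$-estimate over $[0,t]$ would give) requires the precise exponential form of Gronwall with $a(t)=t\norm{f(\tilde x_0,u_*)}$ nondecreasing, and one must check that the Lipschitz constants are legitimately applicable along the entire trajectory, which is exactly what $t\in\triangle_*(\tilde x_0,u_*)$ guarantees. The relaxation $t\leq\sqrt t$ at the very end is the only genuinely lossy step and is what introduces the square root in the stated estimate.
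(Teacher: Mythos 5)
Your argument is correct and complete. Note, however, that the paper itself does not prove Corollary~\ref{coro_cont_lyap}: it simply defers to Corollary~4 of \cite{proskurnikov2018lyapunov}, so the comparison here is with the cited reference rather than with an in-paper proof. Your route --- the add-and-subtract decomposition into $\left(\partder{\tilde x(t)}-\partder{\tilde x_0}\right)f(\tilde x(t),u_*)$ plus $\partder{\tilde x_0}\left(f(\tilde x(t),u_*)-f(\tilde x_0,u_*)\right)$, the Gr\"onwall bound $\norm{\tilde x(t)-\tilde x_0}\leq t\norm{f(\tilde x_0,u_*)}e^{L_{1,c}t}$, and the observation that $t\leq(1+2L_{1,c})^{-1}$ forces $L_{1,c}t\leq\tfrac12$ and hence the factor $\sqrt e$ --- is exactly the mechanism that produces the constants $\sqrt e$ and $L_{2,c}(1+L_{1,c}\sqrt e)$ appearing in $\mu_c$, so it is essentially the standard derivation behind the cited result. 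You correctly verify the two points where the argument could silently fail: that $t\in\triangle_*(\tilde x_0,u_*)$ keeps the whole arc in $\mathcal X_c$ so the Lipschitz constants apply, and that $u_*=\kappa(\tilde x_1)$ with $\tilde x_1\in\mathcal X_c$ matches the three-point form of Assumption~\ref{as_cont_pro1}. One remark worth making explicit: your chain of estimates actually yields the stronger bound with prefactor $t$ rather than $\sqrt t$, and the square root enters only through the lossy relaxation $t\leq\sqrt t$ (valid since $t\leq(1+2L_{1,c})^{-1}\leq 1$). This is logically sufficient for the corollary as stated, and since the downstream use in Lemma~\ref{lem_cont_stab2} integrates the $\sqrt\theta$ form to obtain the $\tfrac23\tone^{3/2}$ term and the resulting $\sigma$-MASP expression, the $\sqrt t$ form is what the rest of the paper is calibrated to --- but it does mean your derivation, if kept in its linear-in-$t$ form, would support a less conservative sampling-period bound than the one the paper ultimately states.
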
 	
	\begin{proof}
		See
		\cite{proskurnikov2018lyapunov}.
	\end{proof}
The upper bound on the time derivative of $V_\CLF(\tilde{x}(t))$ that was computed in Corollary~\ref{coro_cont_lyap} can now be used in different ways in order to design a PETC mechanism.
In \cite{proskurnikov2018lyapunov}, it is shown how a lower bound on the $\sigma$-MASP  and a PETC method can be designed such that \eqref{eq_cont_cont_conv} is guaranteed for all times. In order to exploit non-monotonic Lyapunov functions, we present in the next section an alternative approach based on integrating the bound on the time derivative of $V_\CLF(\tilde{x}(t))$. We will show with this integrated bound, how an improved lower bound on the $\sigma$-MASP and a PETC mechanism can be designed\change{, such that the conditions from Proposition~\ref{prob_michel} hold and satisfaction of the convergence criterion  \eqref{eq_cont_v_smaller} is guaranteed
	}{}.

\section{Main Results}
\label{sec_main_res}
Now, we proceed to our main results that are the construction of a lower bound on the $\sigma$-MASP and the design of a PETC mechanism such that asymptotic stability of the origin of the DDS~\eqref{eq_cont_sys_comp} and satisfaction of the convergence criterion \eqref{eq_cont_v_smaller} are guaranteed. \change{if the  PETC mechanism is used periodically with sampling period $ h $ chosen according to the $\sigma$-MASP bound or smaller.}{} 
\subsection{A Lower Bound on the $\sigma$-MASP}
In this subsection, we tackle the problem of constructing a lower bound on the $\sigma$-MASP such that asymptotic stability of the origin of the DDS~\eqref{eq_cont_sys_comp} and satisfaction of the convergence criterion \eqref{eq_cont_v_smaller}  are guaranteed for periodic triggering with sampling period $h$ chosen according to the $\sigma$-MASP bound if at most $m$ successive packets are lost. The sampling period $h$ will also be used subsequently for the PETC mechanism. 
We assume that a Lyapunov function and a controller for the continuous-time system are designed to satisfy Assumption~\ref{asum_cont_clf}. Then,
 the following Lemma can be used to construct a lower bound on the $\sigma$-MASP, here denoted by $h_{\sigma\text{\normalfont -MASP}}$, and to choose the sampling period $ h $.

\begin{lemma}
	\label{lem_cont_stab2}
	Let Assumptions~\ref{asum_cont_clf}-\ref{as_cont_pro3} hold for the chosen $c\in\mathbb{R}_{>0}$. Assume  system \eqref{eq_sys_cont} is used with controller \eqref{eq_cont_fb}, prediction \eqref{eq_cont_est}, $f_e(x) = 0$, and with $x(\tau_k)\in\mathcal{X}_c$ for some $\tau_{k}\in\mathcal{T}$. 
	Let the next successful transmission take place at a time $\tau_{k+1} = \tau_k + j  h $ for some $j \in \left\lbrace 1,...,m+1\right\rbrace$ and $(m+1) h  \leq h_{\sigma\text{\normalfont -MASP}}$ with 
	\begin{equation}
	\label{eq_h_max}
		h_{\sigma\text{\normalfont -MASP}} = \min \left\lbrace \left(\frac{3 (1-\sigma)}{2 \mu_c M_{\max,c}}\right)^2, (1+2L_{1,c})^{-1}\right\rbrace,
	\end{equation}
	 where 
	 	$M_{\max,c} = \underset{x \in \mathcal{X}_c}{\sup} M_c(x)$
	 and $\sigma \in \left(0,1\right)$. Then, 
\eqref{eq_v_desc} and \eqref{eq_v_desc_2} hold for $V_\CLF$  on $\mathcal{X}_{c}$
with $\gamma_2 = \sigma  \gamma $. Furthermore, if $V_\CLF(x(\tau_k)) \leq S(\tau_k,x_0)$, then the convergence criterion \eqref{eq_cont_v_smaller} holds for $t \in \left[\tau_k,\tau_{k+1}\right]$ and $V_\CLF(x({\tau_{k+1}})) \leq S(\tau_{k+1},x_0)$.

\end{lemma}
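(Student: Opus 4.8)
The plan is to reduce everything to a single Cauchy problem on the inter-transmission interval and then invoke Corollary~\ref{coro_cont_lyap}. Since $f_e(x)=0$ and no successful transmission occurs strictly between $\tau_k$ and $\tau_{k+1}$, the prediction is frozen at $\hat{x}(t)=x(\tau_k)$ on $[\tau_k,\tau_{k+1})$, so with $u_*=\kappa(x(\tau_k))$ the true state solves $\dot{x}=f(x,u_*)$ there. Identifying this trajectory with the solution $\tilde{x}$ of Corollary~\ref{coro_cont_lyap} for $\tilde{x}_0=\tilde{x}_1=x(\tau_k)$, I would first record that the corollary is applicable for $t\in[0,\tau_{k+1}-\tau_k]$: the domain restriction $t\le(1+2L_{1,c})^{-1}$ follows from $\tau_{k+1}-\tau_k\le(m+1)h\le h_{\sigma\text{-MASP}}\le(1+2L_{1,c})^{-1}$ by \eqref{eq_h_max}, while $t\in\triangle_*(\tilde{x}_0,u_*)$ (i.e.\ staying inside $\mathcal{X}_c$) I would obtain a posteriori by a bootstrapping argument once monotonicity of $V_\CLF$ is established.

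The core step is an integrated decrease estimate. Evaluating Corollary~\ref{coro_cont_lyap} along $\tilde{x}(t)=x(\tau_k+t)$ bounds $\timeder{x(\tau_k+t)}{u_*}$ by $\timeder{x(\tau_k)}{u_*}+\sqrt{t}\mu_c(\norm{\partder{x(\tau_k)}}\norm{f(x(\tau_k),u_*)}+\norm{f(x(\tau_k),u_*)}^2)$. Since $u_*=\kappa(x(\tau_k))$, Assumption~\ref{as_cont_pro3} lets me replace the bracket by $M_c(x(\tau_k))\abs{\timeder{x(\tau_k)}{u_*}}$, while \eqref{eq_cont_clf} gives $\timeder{x(\tau_k)}{u_*}\le-\gamma(V_\CLF(x(\tau_k)))<0$, so $\abs{\timeder{x(\tau_k)}{u_*}}=-\timeder{x(\tau_k)}{u_*}$. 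The running Lie derivative is thus bounded by $\timeder{x(\tau_k)}{u_*}(1-\sqrt{t}\mu_c M_c(x(\tau_k)))$, and integrating $\tfrac{d}{dt}V_\CLF(x(\tau_k+t))$ over $[0,r]$ with $\int_0^r\sqrt{t}\,dt=\tfrac23 r^{3/2}$ gives
\begin{equation*}
V_\CLF(x(\tau_k+r))-V_\CLF(x(\tau_k))\le -\gamma(V_\CLF(x(\tau_k)))\,r\left(1-\tfrac23\sqrt{r}\mu_c M_{\max,c}\right).
\end{equation*}
Here the replacement of $M_c(x(\tau_k))$ by $M_{\max,c}$ is legitimate because $M_c$ multiplies the nonnegative quantity $-\timeder{x(\tau_k)}{u_*}\,\tfrac23 r^{3/2}\mu_c$, so enlarging it only weakens the upper bound; the delicate point throughout is this sign bookkeeping, since Corollary~\ref{coro_cont_lyap} supplies a two-sided bound whereas the definite signs that turn it into a one-sided decrease come from \eqref{eq_cont_clf}.

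From this estimate the two descent conditions are immediate. For $r\le\tau_{k+1}-\tau_k\le h_{\sigma\text{-MASP}}\le\left(\tfrac{3(1-\sigma)}{2\mu_c M_{\max,c}}\right)^2$ one has $1-\tfrac23\sqrt{r}\mu_c M_{\max,c}\ge\sigma>0$, hence the right-hand side is nonpositive, giving \eqref{eq_v_desc}, and moreover $V_\CLF(x(\tau_k+r))-V_\CLF(x(\tau_k))\le-\sigma\gamma(V_\CLF(x(\tau_k)))\,r$. Dividing the latter at $r=\tau_{k+1}-\tau_k$ by $\tau_{k+1}-\tau_k$ yields \eqref{eq_v_desc_2} with $\gamma_2=\sigma\gamma$. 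With \eqref{eq_v_desc} in hand, $V_\CLF(x(\tau_k+r))\le V_\CLF(x(\tau_k))\le c$ shows the trajectory never leaves $\mathcal{X}_c$ on the interval, which closes the bootstrapping that justified applying Corollary~\ref{coro_cont_lyap}; together with Proposition~\ref{prop_help_eq} this also yields $\xi(\tau_k)\in\mathcal{X}_{c,2}$.

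For the convergence statement I would feed the linear-rate estimate into Proposition~\ref{prop_cont_conv}. Taking $C_2=V_\CLF(x(\tau_k))\le S(\tau_k,x_0)$ and $C_1=V_\CLF(x(\tau_k+r))$, the inequality $C_1\le C_2-r\sigma\gamma(C_2)$ holds for every $r\in[0,\tau_{k+1}-\tau_k]$, so Proposition~\ref{prop_cont_conv} (with $s=\tau_k$) gives $V_\CLF(x(\tau_k+r))\le S(\tau_k+r,x_0)$; in particular $r=\tau_{k+1}-\tau_k$ produces $V_\CLF(x(\tau_{k+1}))\le S(\tau_{k+1},x_0)$, the handoff needed to re-enter the hypothesis at the next transmission. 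Applying this inductively over successive intervals propagates $V_\CLF(x(s))\le S(s,x_0)$ to all $s\ge\tau_k$, and since $S(\cdot,x_0)$ is nonincreasing and $(m+1)h\ge0$, it follows that $V_\CLF(x(t+(m+1)h))\le S(t+(m+1)h,x_0)\le S(t,x_0)$ for $t\in[\tau_k,\tau_{k+1}]$, which is exactly \eqref{eq_cont_v_smaller}. I expect the main obstacles to be the sign and monotonicity bookkeeping in deriving the integrated bound and the bootstrapping argument for invariance of $\mathcal{X}_c$; connecting the purely local interval estimate to the time-shifted criterion \eqref{eq_cont_v_smaller} requires the inductive propagation of the handoff together with the monotonicity of $S$, as the shifted argument $t+(m+1)h$ lands at or beyond $\tau_{k+1}$.
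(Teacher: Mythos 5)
Your proof is correct and follows essentially the same route as the paper's: reduce to the frozen-input Cauchy problem, combine Corollary~\ref{coro_cont_lyap} with Assumptions~\ref{asum_cont_clf} and \ref{as_cont_pro3} to get the one-sided bound on the Lie derivative, integrate to obtain the $r-\tfrac{2}{3}r^{3/2}\mu_c M_{\max,c}$ decrease estimate, check the algebraic condition on $r$ against \eqref{eq_h_max}, and invoke Proposition~\ref{prop_cont_conv} for the convergence part. The only (harmless) differences are that you verify $1-\tfrac{2}{3}\sqrt{r}\mu_c M_{\max,c}\ge\sigma$ directly where the paper argues via the zeros and concavity of $r(1-\sigma-\tfrac{2}{3}\sqrt{r}\mu_c M_{\max,c})$, and you spell out the propagation from the pointwise bound $V_\CLF(x(s))\le S(s,x_0)$ to the shifted criterion \eqref{eq_cont_v_smaller}, which the paper dispatches by monotonicity of $S$.
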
 

 \begin{proof}
 	\arxiv{The proof is omitted due to spacial limitations. It can be found in the accepted version \cite{hertneck2019archive}.}{The proof is given in Appendix~\ref{append_a}.}
 \end{proof}
 
\begin{rema}
	If a transmission is triggered periodically with a sampling period of $h$, then Proposition~\ref{prob_michel}, Lemma~\ref{lem_cont_stab2} and Proposition~\ref{prop_help_eq}  can be used to guarantee asymptotic stability of the origin of the DDS~\eqref{eq_cont_sys_comp} and satisfaction of the convergence criterion \eqref{eq_cont_v_smaller} even \change{though}{if} up to $m$ successive packets may be lost. For $m=0$, $h$ may be equal to the $\sigma$-MASP bound $h_{\sigma\text{\normalfont -MASP}}$.
\end{rema}

\begin{coro}
	If
\change{	\begin{equation*}
	\left(\frac{3 (1-\sigma)}{2 \mu_c M_{\max,c}}\right)^2 \leq (1+2L_{1,c})^{-1}
	\end{equation*}}{$\left(\frac{3 (1-\sigma)}{2 \mu_c M_{\max,c}}\right)^2 \leq (1+2L_{1,c})^{-1}$}
	holds,  we obtain a bound on the $\sigma$-MASP that is at least $\frac{9}{4}$ times larger than the bound from \cite{proskurnikov2018lyapunov} that is given in the least conservative case according to \cite[Lemma~3]{proskurnikov2018lyapunov} by 
\change{	\begin{equation*}
	h_{\sigma\text{\normalfont -MASP},\text{\normalfont \cite{proskurnikov2018lyapunov}}} =  \min \left\lbrace \left(\frac{ (1-\sigma)}{ \mu_c M_{\max,c}}\right)^2 ,(1+2L_{1,c})^{-1} \right\rbrace
	\end{equation*}}{$h_{\sigma\text{\normalfont -MASP},\text{\normalfont \cite{proskurnikov2018lyapunov}}} =  \min \left\lbrace \left(\frac{ (1-\sigma)}{ \mu_c M_{\max,c}}\right)^2 ,(1+2L_{1,c})^{-1} \right\rbrace$}
	while \change{still}{} guaranteeing the same worst case average convergence rate.
\end{coro}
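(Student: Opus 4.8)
The plan is to reduce the statement to a direct comparison of the two closed-form expressions, since both the proposed bound $h_{\sigma\text{\normalfont -MASP}}$ from Lemma~\ref{lem_cont_stab2} and the reference bound $h_{\sigma\text{\normalfont -MASP},\text{\normalfont \cite{proskurnikov2018lyapunov}}}$ are minima of two competing terms, one of which is the common cap $(1+2L_{1,c})^{-1}$. First I would factor the rate-dependent term of the proposed bound as
\begin{equation*}
\left(\frac{3(1-\sigma)}{2\mu_c M_{\max,c}}\right)^2 = \frac{9}{4}\left(\frac{1-\sigma}{\mu_c M_{\max,c}}\right)^2,
\end{equation*}
which exposes the factor $\tfrac{9}{4}$ explicitly and shows that the only discrepancy between the two rate-dependent terms is this constant.

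Next I would use the hypothesis to determine which argument realizes each minimum. The assumed inequality $\left(\tfrac{3(1-\sigma)}{2\mu_c M_{\max,c}}\right)^2 \leq (1+2L_{1,c})^{-1}$ states precisely that the rate-dependent term of the proposed bound does not exceed the cap, hence $h_{\sigma\text{\normalfont -MASP}} = \tfrac{9}{4}\left(\tfrac{1-\sigma}{\mu_c M_{\max,c}}\right)^2$. The same inequality implies $\left(\tfrac{1-\sigma}{\mu_c M_{\max,c}}\right)^2 \leq \tfrac{4}{9}(1+2L_{1,c})^{-1} < (1+2L_{1,c})^{-1}$, so the rate-dependent term also attains the reference minimum, giving $h_{\sigma\text{\normalfont -MASP},\text{\normalfont \cite{proskurnikov2018lyapunov}}} = \left(\tfrac{1-\sigma}{\mu_c M_{\max,c}}\right)^2$. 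Dividing the two yields the ratio $\tfrac{9}{4}$; the phrase ``at least'' accounts for the fact that the reference value used here is its least conservative form from \cite[Lemma~3]{proskurnikov2018lyapunov}, so any more conservative reference bound can only enlarge the factor.

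Finally, for the convergence-rate assertion I would observe that both $\sigma$-MASP notions are defined for the same $\sigma \in (0,1)$ entering the comparison system $S(t,x_0)$ in \eqref{eq_cont_def_s}. Lemma~\ref{lem_cont_stab2} guarantees the averaged criterion \eqref{eq_cont_v_smaller} with this very $\sigma$, so the worst case averaged convergence rate encoded by $S$ is preserved and no rate is traded away in exchange for the enlarged sampling period. I expect no genuine obstacle in this argument: it is essentially bookkeeping, and the only point demanding care is correctly identifying which of the two terms attains each minimum under the stated hypothesis.
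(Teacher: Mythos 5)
Your argument is correct and matches what the paper leaves implicit (the corollary is stated without an explicit proof precisely because it reduces to this bookkeeping): under the hypothesis both minima in \eqref{eq_h_max} and in $h_{\sigma\text{\normalfont -MASP},\text{\normalfont \cite{proskurnikov2018lyapunov}}}$ are attained by their rate-dependent terms, which differ exactly by the factor $\tfrac{9}{4}$, and the same $\sigma$ enters $S(t,x_0)$ in both cases. Your identification of which argument realizes each minimum and your reading of ``at least'' are both accurate, so nothing further is needed.
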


\subsection{A Robust Dynamic PETC Mechanism}
Now, we are ready to present a dynamic PETC mechanism, that guarantees asymptotic stability and satisfaction of the convergence criterion \eqref{eq_cont_v_smaller} 
 despite packet loss. 
 
 The trigger mechanism is \change{summarized in}{given by} Algorithm~\refalgoo ~which can be viewed as a nonlinear counterpart to  Algorithm~1 from \cite{linsenmayer2018event2}.  
 The main idea is to decide whether a transmission is necessary based on an upper bound on the evolution of the systems Lyapunov function derived from Corollary~\ref{coro_cont_lyap} and on a time-varying trigger rule.
 The bound on the evolution of the Lyapunov function (denoted by $\sigma_z$ in Algorithm~\refalgoo) changes dynamically with the number of failed transmissions since the last successful transmission ($\bar{m}$ in Algorithm~\refalgoo), and  replaces the prediction based on an exact discretization, that was used in \cite{linsenmayer2018event2} to determine if triggering is necessary for linear systems. The guarantees for asymptotic stability and satisfaction of the convergence criterion~\eqref{eq_cont_v_smaller} rely on the choice of the sampling period according to Lemma~\ref{lem_cont_stab2}. Based on Algorithm~\refalgoo, we can state the following theorem.

\begin{table}[tb]
	\begin{algorithm}[H]
		\caption{Dynamic triggering mechanism at $z\in\mathbb{N}_0$.}
		\label{algo_cont_dyn_trig}
	\begin{algorithmic}[1]
		\If{$z = 0$}
		\State $i^\text{\normalfont ref} \leftarrow 0$,  $x^\text{\normalfont ref} \leftarrow x_0$, $\bar{m} \leftarrow 0$,  $u_* \leftarrow \kappa(x_0)$
		\State $V^\text{\normalfont ref} \leftarrow V_{\CLF}(x^\text{\normalfont ref})$ with $V_{\CLF}$ according to Assumption~\ref{asum_cont_clf}
		\State send $x_0$ over the network (successful by assumption)
		\Else
	
		\State $\sigma_z \leftarrow  h  (m-\bar{m}+1) \timeder{x(zh)}{u_*} +\frac{2}{3} ( h (m-\bar{m}+1))^{3/2} \mu_c \left( \norm{\partder{x(zh)}} \norm{f(x(zh),u_*)} + \norm{f(x(zh),u_*)}^2\right) $

		\If {$z-i^\text{\normalfont ref} > \nu$ \textbf{or} 			
			$V_\CLF(x(zh))+\sigma_z \geq V^\text{\normalfont ref} - (z-i^\text{\normalfont ref}+ m -\bar{m} +1)  h  \sigma\gamma \left( V^\text{\normalfont ref}\right)$} \label{line_trigger_cond}
		\State send $x(zh)$ over the network and wait for acknowledgment 
		\If {transmission is successful}
		\State $i^\text{\normalfont ref} \leftarrow z$, $V^\text{\normalfont ref} \leftarrow V_{\CLF}(x(zh))$, $\bar{m} \leftarrow 0$, $u_* \leftarrow \kappa(x(zh))$ 
		\Else
		\State $\bar{m} \leftarrow \bar{m}+1$
		\EndIf
		\Else
		\State no transmission of $x(zh)$ necessary
		\EndIf

		\EndIf
	\end{algorithmic}
	\end{algorithm}
\end{table}
 \begin{theo}
 	\label{theo_cont_stab}
 	Let Assumptions~\ref{asum_cont_clf}-\ref{as_cont_pro3} hold on $\mathcal{X}_c$ for the chosen $c \in \mathbb{R}_{>0}$. Assume  system \eqref{eq_sys_cont} is used with controller \eqref{eq_cont_fb}, prediction \eqref{eq_cont_est}, with $f_e(x) = 0$ and with $x(0) \in \mathcal{X}_c$. Assume furthermore that necessary transmissions are detected with the trigger mechanism specified by Algorithm~\refalgoo ~that is evaluated periodically with a sampling period $ h $ chosen as in Lemma~\ref{lem_cont_stab2}, $\sigma \in  \left(0, 1\right)$, $\nu \in \mathbb{N}$ \change{}{(arbitrary large)} and that current state information is received successfully at $\tau_0 = 0$. Then, the origin of the DDS~\eqref{eq_cont_sys_comp} is locally asymptotically stable with region of attraction $x_0 \in \mathcal{X}_{c}$ and the convergence criterion \eqref{eq_cont_v_smaller} is satisfied. 
 \end{theo}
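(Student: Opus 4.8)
The plan is to read off from Algorithm~\refalgoo the sequence $\mathcal{T}=(\tau_k)_{k\in\mathbb{N}_0}$ of \emph{successful} transmission times (those at which the successful-transmission branch resets $i^\text{ref},V^\text{ref},\bar m,u_*$), and to verify for it the hypotheses of Proposition~\ref{prob_michel} through the zero-order-hold reduction of Proposition~\ref{prop_help_eq}, while simultaneously propagating the comparison-function bound needed for \eqref{eq_cont_v_smaller}. I would first settle well-posedness and the dwell-time bound~\eqref{eq_non_mon_dec3}: the lower bound is $\underline\eta=h$ because resets only occur at sampling instants, and the upper bound is finite because the clause $z-i^\text{ref}>\nu$ forces an attempt after at most $\nu$ idle samples while Assumption~\ref{asum_loss_bound} keeps $\bar m\le m$ and guarantees a success within $m+1$ consecutive attempts, so $\overline\eta\le(\nu+m+1)h$.

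The core is an induction over $k$ carrying the joint hypothesis $x(\tau_k)\in\mathcal{X}_c$ and $V_\CLF(x(\tau_k))\le S(\tau_k,x_0)$; the base case holds at $\tau_0=0$ since $x_0\in\mathcal{X}_c$ and $V_\CLF(x_0)=S(0,x_0)$ by \eqref{eq_cont_def_s}. On $[\tau_k,\tau_{k+1})$ the held input equals $u_*=\kappa(x(\tau_k))$ throughout, because the algorithm updates $u_*$ only on a success; hence the whole inter-transmission interval is a single hold segment and Corollary~\ref{coro_cont_lyap} is applicable at every sample $zh$ with $\tilde x_0=x(zh)$ and $\tilde x_1=x(\tau_k)$. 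Integrating the bound of Corollary~\ref{coro_cont_lyap} over $\theta\in[0,T]$, $T=(m-\bar m+1)h\le(m+1)h\le h_{\sigma\text{\normalfont -MASP}}\le(1+2L_{1,c})^{-1}$, reproduces exactly the quantity $\sigma_z$ of Algorithm~\refalgoo as the endpoint value, i.e. $V_\CLF(x(zh+\theta))-V_\CLF(x(zh))\le b_z(\theta)$ with $b_z(T)=\sigma_z$. Since $b_z'(\theta)=\timeder{x(zh)}{u_*}+\sqrt\theta\,\mu_c(\cdots)$ is nondecreasing in $\theta$, the bound $b_z$ is unimodal and attains its maximum on $[0,T]$ at an endpoint, so $V_\CLF(x(zh+\theta))\le V_\CLF(x(zh))+\max\{0,\sigma_z\}$ on the whole horizon.

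Next I would convert the two clauses of line~\ref{line_trigger_cond} into the two Lyapunov inequalities. At every sample at which no transmission is triggered one has $V_\CLF(x(zh))+\sigma_z<V^\text{ref}-(z-i^\text{ref}+m-\bar m+1)h\,\sigma\gamma(V^\text{ref})\le V^\text{ref}$; feeding this into the unimodal bound above and chaining the overlapping horizons of successive (non-triggering or failed-attempt) samples gives $V_\CLF(x(\tau_k+r))\le V^\text{ref}=V_\CLF(x(\tau_k))$ for $0\le r<\tau_{k+1}-\tau_k$, i.e.~\eqref{eq_v_desc}, and in particular keeps the trajectory in $\mathcal{X}_c$ so that Assumptions~\ref{as_cont_pro1}--\ref{as_cont_pro3} and Corollary~\ref{coro_cont_lyap} remain valid. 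For \eqref{eq_v_desc_2} I would evaluate the guard at the actual success: the right-hand side $V^\text{ref}-(z-i^\text{ref}+m-\bar m+1)h\,\sigma\gamma(V^\text{ref})$ is precisely the target value of $V_\CLF$ at the \emph{worst-case} next success, while $V_\CLF(x(zh))+\sigma_z$ over-estimates the value actually attained there; since $\bar m\le m$ forces the true success $\tau_{k+1}$ to occur no later than that worst case, one obtains $V_\CLF(x(\tau_{k+1}))\le V^\text{ref}-(\tau_{k+1}-\tau_k)\sigma\gamma(V^\text{ref})$, which is \eqref{eq_v_desc_2} with $\gamma_2=\sigma\gamma$ (when only the $\nu$-clause fires, the Lyapunov clause never did, so $V_\CLF$ fell at least as fast and the inequality holds a fortiori). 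The same lagged estimate, with $C_2=V^\text{ref}\le S(\tau_k,x_0)$, fed into Proposition~\ref{prop_cont_conv} both propagates the invariant to $\tau_{k+1}$ and yields the $(m+1)h$-shifted criterion \eqref{eq_cont_v_smaller} on $[\tau_k,\tau_{k+1}]$, along the lines of the proof of Lemma~\ref{lem_cont_stab2}.

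Finally, with \eqref{eq_v_desc} and \eqref{eq_v_desc_2} established on $\mathcal{X}_c$, Proposition~\ref{prop_help_eq} delivers \eqref{eq_non_mon_dec4}--\eqref{eq_non_mon_desc2} for $V(\xi)=\tfrac12(V_\CLF(\xi_1)+V_\CLF(\xi_2))$; together with the dwell-time bound, $\xi_0\in\mathcal{X}_{c,2}$ (from $x_0\in\mathcal{X}_c$) and invariance of $\mathcal{X}_{c,2}$ through \eqref{eq_non_mon_desc1}, Proposition~\ref{prob_michel} yields local asymptotic stability with region of attraction $\mathcal{X}_{c,2}$ containing every $\xi_0$ with $x_0\in\mathcal{X}_c$, and the interval-wise bounds assemble into \eqref{eq_cont_v_smaller} for all $t\ge0$. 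I expect the main obstacle to be this third step: making rigorous the book-keeping of the shrinking worst-case horizon $(m-\bar m+1)$ under packet loss, and showing that the single scalar $\sigma_z$ simultaneously (i) upper-bounds $V_\CLF$ between sampling instants so that monotonicity~\eqref{eq_v_desc} survives the long, input-held inter-transmission intervals, and (ii) stays consistent with the comparison trajectory $S$ and its $(m+1)h$ time shift once the guard has fired but transmissions keep failing.
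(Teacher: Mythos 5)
Your overall route coincides with the paper's: verify Proposition~\ref{prob_michel} through the ZOH reduction of Proposition~\ref{prop_help_eq} by establishing \eqref{eq_v_desc} and \eqref{eq_v_desc_2} with $\gamma_2=\sigma\gamma$ on each inter-success interval, obtain the dwell-time bounds \eqref{eq_non_mon_dec3} from the $\nu$-clause together with Assumption~\ref{asum_loss_bound}, integrate Corollary~\ref{coro_cont_lyap} to recover $\sigma_z$ as the endpoint value of a bound that is convex in the look-ahead time (so its maximum sits at an endpoint), and propagate $V_\CLF(x(\tau_k))\leq S(\tau_k,x_0)$ via Proposition~\ref{prop_cont_conv}. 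The case with no idle samples between two successes is also handled identically, by Lemma~\ref{lem_cont_stab2} alone.

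The gap is the step you yourself flag as ``the main obstacle,'' and it is not a formality: it is the bulk of the paper's argument. The paper closes it by an explicit induction over the idle (non-triggering) samples $hl_1<\dots<hl_{p_{\max}}$ between $\tau_k$ and $\tau_{k+1}$: the base case is Lemma~\ref{lem_cont_stab2} applied at $\tau_k$, which, since $hl_1=\tau_k+(\bar m_1+1)h$, already covers successes up to $m-\bar m_1$ samples past $hl_1$; each idle sample $l_p$ then extends the guaranteed horizon by exactly one sampling period, from $m-\bar m_p$ to $m-\bar m_p+1$ samples past $hl_p$, and the identity $hl_{p+1}=hl_p+(\bar m_{p+1}-\bar m_p+1)h$ hands that guarantee to the next idle sample. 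Your justification of \eqref{eq_v_desc_2} for an \emph{early} success --- that the guard bounds the worst-case endpoint and the true success occurs no later --- does not close on its own: the trigger rule controls $V_\boundd(x(hl_p),\cdot)$ only at the full horizon $(m-\bar m_p+1)h$, whereas a success after $jh<(m-\bar m_p+1)h$ must be compared against the \emph{less} decreased target $V^{\text{\normalfont ref}}-(hl_p-\tau_k+jh)\,\sigma\gamma(V^{\text{\normalfont ref}})$, and convexity of $V_\boundd$ in $\tone$ only places intermediate values below the larger of the two endpoint values (cf.~\eqref{eq_v_bound_r}), one of which is $V_\CLF(x(hl_p))$ itself; bounding that quantity requires the inherited estimate at $hl_p$ from the previous induction step, which is precisely what the paper's bookkeeping of $\bar m_p$ supplies. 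So the architecture and all ingredients are right, but the induction that makes the dynamic trigger rule and the packet-loss counter interact correctly still has to be written out.
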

\begin{proof}
		\arxiv{The proof is omitted due to spacial limitations. It can be found in the accepted version \cite{hertneck2019archive}.}{The proof is given in Appendix~\ref{append_b}.}
\end{proof}
 \begin{rema}
 	Algorithm~\refalgoo~and the lower bound on the $\sigma$-MASP can in principle still be used to guarantee stability if $\gamma$ in Assumption~\ref{asum_cont_clf} is only positive definite instead of being of class $\mathcal{K}$. However, Proposition~\ref{prop_cont_conv} does not hold \change{for this case}{then}.
 \end{rema}
 
 \begin{rema}
 	Instead of the dynamic trigger rule in line~\ref{line_trigger_cond} of Algorithm~\refalgoo, 	different trigger rules can be easily incorporated in the Algorithm. Whilst the trigger rule in Algorithm~\refalgoo ~ leads to a bound on $V_\CLF(x(\tau_k))$, that decreases piecewise linearly (between elements  of $\mathcal{T}$), different rules can lead to bounds with different behavior. This can be demonstrated easily for exponentially stabilizable systems, i.e. with $\gamma(V) = KV$ for some $K \in \mathbb{R}_{>0}$. For such systems, we obtain
$	S(t,x_0) = e^{-K\sigma t} V_\gamma(x(0)).$ 
 	Then, we can use   		
 	\vspace{-2.5mm}
 	\begin{table}[H]

 		\footnotesize
 		\begin{algorithmic}[1]	 		
 			\setcounter{ALG@line}{\getrefnumber{line_trigger_cond}-1}
 			\State \textbf{if} {$z-i^\text{\normalfont ref} > \nu$ \textbf{or} 
 					
 				$V_\CLF(x(zh))+\sigma_z \geq  e^{-K\sigma(z-i^\text{\normalfont ref}+ m -\bar{m} +1)  h  } V^\text{\normalfont ref}$}	\textbf{then}
 		\end{algorithmic}
 		\normalsize
 			\vspace{-5.5mm}
 	\end{table} 
 	 \noindent as trigger rule for Algorithm~\refalgoo. If the DDS~\eqref{eq_cont_sys_comp} is controlled using Algorithm~\refalgoo ~with this trigger rule, one can show that \eqref{eq_cont_v_smaller} holds and guarantee  stability of the DDS~\eqref{eq_cont_sys_comp} using a similar argumentation as in the proof of Theorem~\ref{theo_cont_stab}. Differently shaped bounds are possible as well.
 \end{rema}

\begin{rema}
	
	In Algorithm~\refalgoo, an adaptive trigger rule that raises the number of triggered transmissions if the network load is low, similar to the mechanism from \cite{linsenmayer2018event2}, can be easily included \change{To do this, an additional term can be incorporated in the trigger rule, that can be e.g. chosen as }{by modifiying the trigger rule as} 	\vspace{-2.5mm}
	\begin{table}[H]

	\footnotesize
	\begin{algorithmic}[1]
	\setcounter{ALG@line}{\getrefnumber{line_trigger_cond}-1}
		\State \textbf{if} {$z-i^\text{\normalfont ref} > \nu$ \textbf{or} 	
			$V_\CLF(x(zh))+\sigma_z $
			
		$	\geq V^\text{\normalfont ref} - (z-i^\text{\normalfont ref}+ m -\bar{m} +1)  h  (\sigma +c_n(z)) \gamma \left( V^\text{\normalfont ref}\right)$} \textbf{then}
	\end{algorithmic}
	\normalsize
	\vspace{-5.5mm}
	\end{table} 
\noindent	for an adaptive $c_n(z) \geq 0$ that can depend on the state of the communication network. Obviously, this rule leads always to a trigged transmission if the original rule from Algorithm~\refalgoo ~would lead to a triggered transmission.
\end{rema}


\section{Numerical Example}
\label{sec_example}
In this section, an academic pendulum example from \cite{boubaker2013inverted} is employed to demonstrate the proposed PETC method. The system dynamics are given by 
\begin{equation}
\begin{pmatrix}
\dot{x}_1(t) \\
\dot{x}_2(t) 
\end{pmatrix} = \begin{pmatrix}
x_2(t)\\
(sin(x_1(t)) - u(t) cos(x_1(t))) \omega_0
\end{pmatrix}
\end{equation}
with pendulum angle $x_1$, angular velocity $x_2$, input $u$, that is a force that acts on the mass center of the pendulum, and a constant $\omega_0$. For $\omega_0 = 0.1$, $\kappa(x) = \frac{31.6x_1+40.4x_2+sin(x_1)}{cos(x_1)}$ and $V_\CLF(x) = 1.278 x_1^2 + 0.632 x_1 x_2 + 0.404 x_2^2 $, $V_\CLF$ and the resulting DDS~\eqref{eq_cont_sys_comp} satisfy\change{\footnote{The singularity for $cos(x_1) = 0$ is ruled out by the choice for $c$.}}{} for $c = 0.258$ and $\sigma = 0.35$ the assumptions of Theorem~\ref{theo_cont_stab} with 
$\left(\frac{3 (1-\sigma)}{2 \mu_c M_{\max,c}}\right)^2 = 2.77\cdot 10^{-5}$ and $(1+2L_1)^{-1} \approx \frac{1}{4.3}$. Thus, Algorithm~\refalgoo ~can be used to stabilize the pendulum, and we obtain a $\sigma$-MASP bound that is at least  $\frac{9}{4}$ times higher than the $\sigma$-MASP bound from \cite{proskurnikov2018lyapunov} whilst guaranteeing the same average convergence rate.

We consider a network with  uniformly distributed packet dropouts with $m=1$ and obtain $h = 1.38\cdot 10^{-5}$.  Figure~\ref{fig_example} shows state trajectories, input trajectories and $V_\CLF(\xi(t))$ for $ \xi = \left[x_1, x_2,\hat{x}_1,\hat{x}_2\right]^\top$ and $x_0 = [0.43,0]^\top$. 
The average time between two successful transmissions of the  controller is $0.48 s$. Thus, the number of triggered transmission is reduced significantly if the proposed PETC mechanism is used in comparison to periodic time-triggered sampling.

\begin{figure}[tb]
	 \centering
	\includegraphics[width = \linewidth]{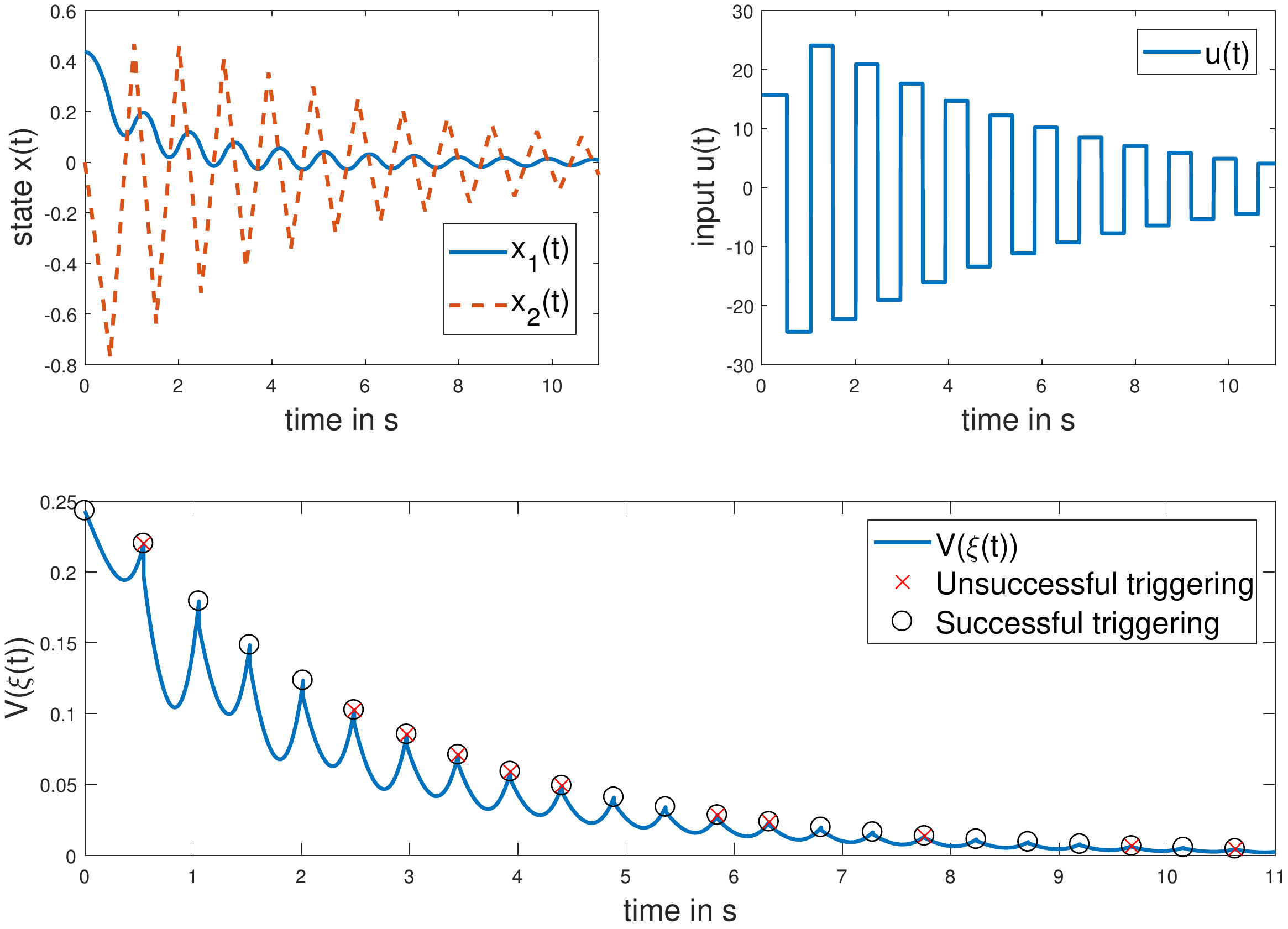}
	\caption{Evolution of system state $x_1(t)$ and $x_2(t)$ (above left), input $u(t)$ (above right) and $V(\xi(t))$ (below) for the pendulum controlled by Algorithm \refalgoo.}
	\label{fig_example}
\end{figure}

\section{Conclusion}
\label{sec_conclusion}
In this paper, we presented a PETC mechanism that can guarantee asymptotic stability and a chosen average convergence rate while reducing the amount of communication for a wide class of nonlinear systems. \change{,assuming existence of a stabilizing controller and a Lyapunov function for the continuous-time system.}{}The theoretical derivations rely on non-monotonic Lyapunov functions that have shown to be a powerful tool for the PETC design. The proposed PETC mechanism is robust to packet loss if a bound on the number of successive lost packets is known. In addition, a method to construct an improved lower bound on the $\sigma$-MASP based on a result from \cite{proskurnikov2018lyapunov} was presented.

Whilst the proposed  PETC mechanism shows a nice behavior in simulations, there are still open points for future research. For example, a modification  in order to guarantee robustness to disturbances and model uncertainties seems to be a natural extension of the proposed PETC mechanism. 
Moreover, using a computable model-based prediction for $\hat{x}$ at the actuator, comparable to the result for linear systems from \cite{Heemels2013}, may be an interesting approach to obtain a further reduction of the number of transmissions that are triggered by the controller.
\bibliography{sources}

\arxiv{}{
  \appendix
  \subsection{Proof for Lemma~\ref{lem_cont_stab2}}
  \label{append_a}
\begin{proof}

	Since $f_e(x) = 0$, we can consider the system $\dot{\tilde{x}} = f(\tilde{x},u_*),~\tilde{x}(0)  = x(\tau_k), u_*  = \kappa(x(\tau_k))$ and have $V_\CLF(\tilde{x}( \tone )) = V_\CLF(x(\tau_k+\tone)) $ for $0 \leq \tone \leq \tau_{k+1} - \tau_k $. Due to Corollary~\ref{coro_cont_lyap} and Assumption~\ref{as_cont_pro3}, we know for $0 < \tone \leq \min \left\lbrace \tau_{k+1} - \tau_k, \left(1+2L_{1,c} \right)^{-1} , t^*\right\rbrace $, that
\change{	\begin{align*}
	&\timeder{\tilde{x}( \tone )}{u_*} \\
	\leq& \timeder{\tilde{x}(0)}{u_*} + \sqrt{\tone} \mu_c M_c(\tilde{x}(0)) \abs{ \timeder{\tilde{x}(0)}{u_*}}
	\end{align*}}{$\timeder{\tilde{x}( \tone )}{u_*} \\
	\leq \timeder{\tilde{x}(0)}{u_*} + \sqrt{\tone} \mu_c M_c(\tilde{x}(0)) \abs{ \timeder{\tilde{x}(0)}{u_*}}$}
	holds, and thus since $\timeder{\tilde{x}(0)}{u_*}\leq -\gamma(V_\gamma(\tilde{x}(0))) \leq 0$ due to Assumption~\ref{asum_cont_clf} and with $M_c(\tilde{x}(0)) \leq M_{\max,c}$,
	\begin{align}
	\nonumber
	\timeder{\tilde{x}( \tone )}{u_*} \leq& \timeder{\tilde{x}(0)}{u_*} (1-\sqrt{ \tone } \mu_c M_c(\tilde{x}(0)))\\
	\nonumber
	\leq&  - \gamma(V_\CLF(\tilde{x}(0))) (1-\sqrt{ \tone } \mu_c M_{\max,c})
	\end{align}
	holds. By a time integration, we obtain,  
 that
	\begin{align*}
	&V_\CLF(\tilde{x}( \tone ))   \\
	\leq& V_\CLF(\tilde{x} (0)) -\gamma(V_\CLF(\tilde{x}(0)))\int_{0}^{ \tone } 1-\sqrt{ \theta } \mu_c M_{\max,c} d\theta \\
	\leq& V_\CLF(\tilde{x} (0))- \gamma(V_\CLF(\tilde{x}(0))) ( \tone -\frac{2}{3}  \tone ^{3/2} \mu_c M_{\max,c}),
	\end{align*}
	and hence 
	\begin{align}	
	\nonumber &V_\CLF({x}(\tau_k+\tone)) \\
	\leq& V_\CLF({x} (\tau_k))  -\gamma(V_\CLF({x}(\tau_k))) (\tone-\frac{2}{3} \tone^{3/2} \mu_c M_{\max,c})  \label{eq_lyap_bound}
	\end{align}
	 holds. If \eqref{eq_lyap_bound} holds, then we observe that
	\begin{equation}
	\label{eq_cont_lem_bound}
	V_\CLF({x}(\tau_k+\tone)) \leq V_\CLF(x(\tau_k)) - \tone \sigma \gamma(V_\CLF(x(\tau_k)))< c
	\end{equation}
	 holds if
	\begin{align}
	\tone(1-\sigma-\frac{2}{3}\sqrt{\tone} \mu_c M_{\max,c}) \geq 0. \label{eq_cont_inc}
	\end{align}
	The left hand side of \eqref{eq_cont_inc} is continuous and has zeros only at $\tone=0$ and at $\tone = \left(\frac{3(1-\sigma)}{2 \mu_c M_{\max,c}} \right)^2 \overset{\eqref{eq_h_max}}{\geq} 	h_{\sigma\text{\normalfont -MASP}}$. The second derivative of the left-hand side of \eqref{eq_cont_inc} w.r.t. $r$ is strictly negative for $\tone>0$ and  thus \eqref{eq_cont_lem_bound} holds for $\tone \in \left(0,	h_{\sigma\text{\normalfont -MASP}}\right]$ if \eqref{eq_lyap_bound} holds. This implies directly, that $r \leq h_{\sigma\text{\normalfont -MASP}} < t^*$ and hence that \eqref{eq_v_desc} and \eqref{eq_v_desc_2} with $\gamma_2 = \sigma \gamma$ hold for $\tau_k$, if $\tau_{k+1} \leq \tau_k + 	h_{\sigma\text{\normalfont -MASP}} \leq \tau_k+(1+2L_{1,c})^{-1}$ what is ensured since $\tau_{k+1} = \tau_k+j h $ for some  $j \in \left\lbrace 1,...,m+1 \right\rbrace$ and $(m+1) h  \leq	h_{\sigma\text{\normalfont -MASP}}$. 	
	To show that the convergence criterion \eqref{eq_cont_v_smaller} holds, we can use \eqref{eq_cont_lem_bound} and Proposition~\ref{prop_cont_conv} with $C_1 =  V_\CLF({x}(\tau_k+\tone)) $, $C_2 = V_\CLF(x(\tau_k)) \leq S(\tau_k,x_0)$, $s = \tau_k$, and every fixed $r\in \left(0,\tau_{k+1}-\tau_k\right]$   to show that
	\begin{align}
	V_\CLF(x(\tau_k + \tone )) \leq S(\tau_k+\tone,x_0) \label{eq_cont_s_better}
	\end{align}
	holds for $\tone \in \left(0, \tau_{k+1}-\tau_k \right]$. This is a stronger result than \eqref{eq_cont_v_smaller} since $ S(t,x_0)$ is monotonically decreasing in $t$ and implies thus also that  \eqref{eq_cont_v_smaller} holds for $t\in\left[\tau_k, \tau_{k+1}\right]$.

\end{proof}
\subsection{Proof for Theorem~\ref{theo_cont_stab}}
\label{append_b}
 \begin{proof}
  We show that the conditions of Proposition~\ref{prob_michel} hold for the DDS~\eqref{eq_cont_sys_comp} and $V(\xi) = V_\CLF(\xi_1) + V_\CLF(\xi_2)$, such that we can conclude asymptotic stability. Furthermore, we prove simultaneously that the convergence criterion \eqref{eq_cont_v_smaller} holds. 
%
 	
 	The first transmission is always successful, i.e. $\tau_0 = 0$. Moreover, we note that a transmission is triggered if the number of periods since the last successful transmission ($z-i^\text{\normalfont ref}$ in Algorithm~\refalgoo) exceeds a bound $\nu$. Additionally, the time between two successful transmissions is lower bounded by $ h $ and thus,  \eqref{eq_non_mon_dec3} holds. 	
 	To show that \eqref{eq_non_mon_dec4}, \eqref{eq_non_mon_desc1} and \eqref{eq_non_mon_desc2} hold, we show that \eqref{eq_v_desc} and \eqref{eq_v_desc_2}  hold and use Proposition~\ref{prop_help_eq}.
 	Due to Proposition~\ref{prop_help_eq}, we know also that $x(\tau_k) \in\mathcal{X}_c$ implies $\xi(\tau_k) \in\mathcal{X}_{c,2}$.
 	
 	 We distinguish between two cases. For each sampling time with successful transmission  $\tau_k$, there is either at least one sampling time where no transmission is necessary according to Algorithm~\refalgoo ~until the next sampling 
 	 time with successful transmission  $\tau_{k+1}$, or there is none. 
 	
 	If there is none and $V_\CLF(x(\tau_k)) \leq S(\tau_k,x_0)$, then we know due to Lemma~\ref{lem_cont_stab2} that \eqref{eq_cont_v_smaller}, \eqref{eq_v_desc} and \eqref{eq_v_desc_2} hold between $\tau_k$ and $\tau_{k+1}$ and $V_\CLF(x({\tau_{k+1}})) \leq S(\tau_{k+1},x_0)$, because the next successful transmission takes place within the next $m+1$ periods in this case by assumption.  Thus, it remains to show, that \eqref{eq_cont_v_smaller}, \eqref{eq_v_desc} and \eqref{eq_v_desc_2}  hold  between $\tau_k$ and $\tau_{k+1}$  and $V_\CLF(x({\tau_{k+1}})) \leq S(\tau_{k+1},x_0)$ if there are sampling instants where no transmission is necessary according to Algorithm~\refalgoo.

 	Therefore, we introduce the sequence of sampling instants between $\tau_k$ and $\tau_{k+1}$ where no transmission is necessary according to Algorithm~\refalgoo ~as $\left(l_p^{\tau_k}\right)_{p \in \left\lbrace 1,\dots,p_{\max}^{\tau_k} \right\rbrace}$ with some $p_{\max}^{\tau_k} \in \mathbb{N}$ . Thus, it holds that  $\tau_k < hl_p^{\tau_k} < \tau_{k+1}$ for all $ p \in \left\lbrace 1,\dots,p_{\max}^{\tau_k} \right\rbrace$.  
 	We denote the number of failed transmissions since $\tau_k$ at the $l_p^{\tau_k}$-th sampling time by $\bar{m}_{p}^{\tau_k} \leq m$ (this equals the value of  $\bar{m}$  in Algorithm~\refalgoo ~at $k = l_p^{\tau_k}$). In the sequel, we will omit the superscript $\tau_k$ for ease of notation.

 	Now, we show that for $l_{p_{\max}}$ there are  guarantees for \eqref{eq_cont_v_smaller}, \eqref{eq_v_desc} and \eqref{eq_v_desc_2} to hold between $\tau_k$ and $\tau_{k+1}$ if the next successful transmission takes place at one of the next $m -\bar{m}_{p_{\max}}+1$ sampling times after $hl_{p_{\max}}$ and that $V_\CLF(x(hl_{p_{\max}}+jh)) \leq S(hl_{p_{\max}}+jh,x_0)$ holds for $j\in\left\lbrace0,\dots,\min \left\lbrace m-\bar{m}_{p_{\max}}+1, \frac{\tau_{k+1}}{h} - l_{p_{\max}}\right\rbrace \right\rbrace$ in this case. 

 	To do this, we consider  first an arbitrary	$l_p$, for which we assume to know that \eqref{eq_cont_v_smaller}, \eqref{eq_v_desc} and \eqref{eq_v_desc_2} hold  between $\tau_k$ and $\tau_{k+1}$ if the next successful transmission takes place at one of the next $m -\bar{m}_p$ sampling times after $hl_p$ and $V_\CLF(x(hl_p+jh)) \leq S(hl_p+jh,x_0)$ holds for $j\in\left\lbrace0,\dots ,  \min\left\lbrace m-\bar{m}_p,\frac{\tau_{k+1}}{h} - l_{p} \right\rbrace \right\rbrace$. We show now using the trigger rule that then \eqref{eq_cont_v_smaller}, \eqref{eq_v_desc} and \eqref{eq_v_desc_2} even hold  between $\tau_k$ and $\tau_{k+1}$ if the next successful transmission takes place at one of the next $m-\bar{m}_{p}+1$ sampling times and $V_\CLF(x(hl_p+jh)) \leq S(hl_p+jh,x_0)$ holds even for $j\in\left\lbrace0,\dots, \min \left\lbrace m-\bar{m}_p+1,\frac{\tau_{k+1}}{h} - l_{p}  \right\rbrace \right\rbrace$.  	
	We will later use this result iteratively from $l_1$ to $l_{p_{\max}}$ together with the fact that $hl_{p+1} = hl_p + (\bar{m}_{p+1}-\bar{m}_p+1)h$ to obtain the desired guarantees for $l_{p_{\max}}$.
 	We define the auxiliary function 
	\begingroup
	\begin{align}
		& V_\boundd(x(hl_p),\tone)  \nonumber\\
		:=&  V_\CLF(x(hl_p)) + \tone \timeder{x(hl_p)}{u_*} \nonumber \\
		&+ \frac{2}{3} \tone^{3/2} \mu_c \left( \norm{\partder{x(hl_p)}} \norm{f(x(hl_p),u_*)} \right. \nonumber \\
		&+ \left. \norm{f(x(hl_p),u_*)}^2\right). \label{eq_cont_bound}
	\end{align}
	\endgroup
 	Since no transmission was triggered at time $hl_p$, we know due to the trigger rule in Algorithm~\refalgoo ~and with $i^\text{\normalfont ref}h = \tau_k$ and $V^\text{\normalfont ref} = V_\CLF(x(\tau_k))$ that 
 	\begin{align}
 	&V_\CLF(x(hl_p ))+\sigma_z < V_\CLF(x(\tau_k))\nonumber \\ -& (hl_p  - \tau_k+(m-\bar{m}_p+1 )  h )  \sigma\gamma ( V_\CLF(x(\tau_k))).\label{eq_lyap_bound2}
 	\end{align}
 	Inserting 
 	 $\tone = h(m-\bar{m}_p+1)  $ in \eqref{eq_cont_bound}, we obtain with \eqref{eq_lyap_bound2}, $z=l_p$ and $\sigma_z$ from Algorithm~\refalgoo
 	 \begin{align}
	   &V_\boundd(x(hl_p),(m-\bar{m}_p+1)h)  <  V_\CLF(x(\tau_k))\nonumber \\
 	 -& (hl_p  - \tau_k+(m-\bar{m}_p+1 )  h ) \sigma \gamma(V_\CLF(x(\tau_k))). \label{eq_cont_v_bound_3}
 	 \end{align}
 	 The second derivative of $V_\boundd(x(hl_p),\tone)$ w.r.t. $r$ is positive for  $0 < \tone \leq (m-\bar{m}_p+1)  h $, and thus, $V_\boundd(x(hl_p),\tone)$ must have its maximum on the interval $0 \leq \tone \leq (m-\bar{m}_p+1)  h $ either at $\tone = 0$ or at $\tone = (m-\bar{m}_p+1)  h $, i.e. 
 	  	 \begin{align}
 	  	 & V_\boundd(x(hl_p),\tone ) \nonumber \\
 	  	 \leq&\max \left\lbrace  V_\CLF(x(hl_p)), V_\boundd(x(hl_p),(m-\bar{m}_p+1 )h  )  \right\rbrace < c \label{eq_v_bound_r}
 	  	 \end{align} 
 	  	 
 	  	 Now, we consider an auxiliary system starting at time $hl_p $ defined by $\dot{\tilde{x}} = f(\tilde{x},u_*),~\tilde{x}(0)  = x(hl_p), u_*  = \kappa(x(\tau_k))$ and have $V_\CLF(\tilde{x}( \tone )) = V_\CLF(x(hl_p+\tone)) $ for $0 \leq \tone \leq \tau_{k+1}-hl_p$. Then by Corollary~\ref{coro_cont_lyap}, we obtain using the same argumentation as in the proof of Lemma~\ref{lem_cont_stab2}, but without using Assumption~\ref{as_cont_pro3}  for $0\leq \tone\leq \min \left\lbrace \tau_{k+1}-hl_p, (1+2L_{1,c})^{-1},t^* \right\rbrace$ 
 	  	 \begin{equation}
 	  	 V_\CLF(x(hl_p+\tone))  =  V_\CLF(\tilde{x}( \tone )) \leq V_\boundd(x(hl_p),\tone). \label{eq_cont_bound2}
 	  	 \end{equation}
 	  	 Due to \eqref{eq_v_bound_r} and the choice of $h$ according to Lemma~\ref{lem_cont_stab2}, we know that ($m-\bar{m}_p+1)h \leq \min \left\lbrace t^*, (1+2L_{1,c})^{-1} \right\rbrace$.
	 	 Thus, \eqref{eq_cont_v_bound_3}-\eqref{eq_cont_bound2} ensure that \eqref{eq_v_desc} holds and \eqref{eq_v_desc_2} holds with $\gamma_2 =  \sigma \gamma$  if the next successful transmission takes place at  one of the next $m-\bar{m}_p+1$ sampling times after $hl_p$. To show that  \eqref{eq_cont_v_smaller} holds, we use \eqref{eq_cont_v_bound_3} and Proposition~\ref{prop_cont_conv} with $s = \tau_k$, $r = hl_p-\tau_k+(m-\bar{m}_p+1)h$, $C_1 = V_\boundd(x(hl_p),(m-\bar{m}_p+1 )  h)$ and $C_2  =V_\CLF(x(\tau_k)) \leq S(\tau_k,x_0) $ and obtain
	 	 $V_\boundd(x(hl_p),(m-\bar{m}_p+1 )  h) \leq S(hl_p+(m-\bar{m}_p+1 )  h,x_0).$
	 	 Hence, $V_\CLF(x(hl_p+jh)) \leq S(hl_p+jh,x_0)$ holds due to  \eqref{eq_cont_bound2} even for  $j\in\left\lbrace0,\dots,\min \left\lbrace m-\bar{m}_p+1, \frac{\tau_{k+1}}{h} - l_p \right\rbrace \right\rbrace$. Moreover, it holds due to the monotonicity of $S$, that 
	 	 $\max \left\lbrace  V_\CLF(x(hl_p)), V_\boundd(x(hl_p),(m-\bar{m}_p+1 )h  )  \right\rbrace \leq S(hl_p,x_0)$. Using additionally \eqref{eq_v_bound_r} and \eqref{eq_cont_bound2}, we observe that $V_\CLF(x(hl_p+r)) \leq S(hl_p,x_0)$ holds for $0\leq r \leq (m-\bar{m}_p+1)h$ and hence \eqref{eq_cont_v_smaller} holds between $\tau_k$ and $\tau_{k+1}$ if the next successful transmission takes place at one of the next $m-\bar{m}_p+1$ sampling times after $hl_p$.
	 	 
	 	 Since $hl_1 = \tau_k+(\bar{m}_1+1)h$,
	 	 we know from Lemma~\ref{lem_cont_stab2} that \eqref{eq_cont_v_smaller}, \eqref{eq_v_desc} and \eqref{eq_v_desc_2} hold between $\tau_k$ and $\tau_{k+1}$ if the next successful transmission takes place at one of the next  $m -\bar{m}_1$ sampling times and according  to \eqref{eq_cont_s_better} that $V_\CLF(x(\tau_k+jh)) \leq S(\tau_k+jh,x_0)$ holds for $j\in\left\lbrace0,\dots,\min \left\lbrace m+1, \frac{\tau_{k+1}-\tau_k}{h} \right\rbrace \right\rbrace$ and thus $V_\CLF(x(hl_1+jh)) \leq S(hl_1+jh,x_0)$ holds for $j\in\left\lbrace0,\dots,\min \left\lbrace  m-\bar{m}_1, \frac{\tau_{k+1}}{h} - l_1\right\rbrace \right\rbrace$. We can proceed now iteratively from $l_1$ to $l_{p_{\max}}$ using the above deviations for arbitrary $l_p$ and $hl_{p+1} = hl_p + (\bar{m}_{p+1}-\bar{m}_p +1)h$, to show that for $l_{p_{\max}}$, there are guarantees for \eqref{eq_cont_v_smaller}, \eqref{eq_v_desc} and \eqref{eq_v_desc_2} to hold between $\tau_k$ and $\tau_{k+1}$ if the next successful transmission takes place at one of the next $m -\bar{m}_{p_{\max}}+1$ sampling times after $hl_{p_{\max}}$ and that $V_\CLF(x(hl_p+jh)) \leq S(hl_p+jh,x_0)$ holds for $j\in\left\lbrace0,\dots,\min \left\lbrace m-\bar{m}_{p_{\max}}+1, \frac{\tau_{k+1}}{h} - l_{p_{\max}}\right\rbrace \right\rbrace$.

 	Finally, $\tau_{k+1}$ must be due to Assumption~\ref{asum_loss_bound} one of the next  $m-\bar{m}_{p_{\max}}+1$ sampling times after $hl_{p_{\max}}$. As a result, \eqref{eq_v_desc} and \eqref{eq_v_desc_2} and the convergence criterion \eqref{eq_cont_v_smaller} hold always between two sampling times with successful transmission
	 and we can use Propositions~\ref{prob_michel} and \ref{prop_help_eq}  to show asymptotic stability of the origin of the DDS~\eqref{eq_cont_sys_comp}.
 \end{proof} }


\end{document}